\definecolor{darkblue}{rgb}{0,0,0.5}
\theoremstyle{plain}
\newtheorem{thm}{Theorem}
\newtheorem{lem}[thm]{Lemma}
\newtheorem{pro}[thm]{Proposition}
\theoremstyle{definition}
\newcommand{\eq}[1]{(\hyperref[eq:#1]{\ref*{eq:#1}})}
\renewcommand{\sec}[1]{\hyperref[sec:#1]{Section~\ref*{sec:#1}}}
\newcommand{\thrm}[1]{\hyperref[thrm:#1]{Theorem~\ref*{thrm:#1}}}
\newcommand{\lemm}[1]{\hyperref[lemm:#1]{Lemma~\ref*{lemm:#1}}}
\newcommand{\prop}[1]{\hyperref[prop:#1]{Proposition~\ref*{prop:#1}}}
\newcommand{\corr}[1]{\hyperref[corr:#1]{Corollary~\ref*{corr:#1}}}
\newcommand{\fig}[1]{\hyperref[fig:#1]{~\ref*{fig:#1}}}
\newcommand{\deff}[1]{\hyperref[deff:#1]{~\ref*{deff:#1}}}
\newcommand{\ff}{\mathcal{F}}
\newcommand{\mm}{\mathcal{M}}
\newcommand{\xx}{\mathcal{X}}
\newcommand{\yy}{\mathcal{Y}}
\DeclareMathAlphabet{\matheu}{U}{eus}{m}{n}
\DeclareMathOperator{\Tr}{Tr}
\renewcommand{\succ}{\text{\rm succ}}
\newcommand{\TT}{\mathcal{L}}
\newcommand{\ketbra}[2]{|{#1}\rangle\!\langle{#2}|}
\newcommand{\ba}{\begin{eqnarray}}
\newcommand{\ea}{\end{eqnarray}}
\newcommand{\bann}{\begin{eqnarray*}}
\newcommand{\eann}{\end{eqnarray*}}
\newcommand{\dm}[1]{\ketbra{#1}{#1}}
\newcommand{\bs}[1]{\boldsymbol{#1}}
\DeclareFontFamily{U}{mathc}{}
\DeclareFontShape{U}{mathc}{m}{it}%
{<->s*[1.03] mathc10}{}
\DeclareMathAlphabet{\mathscr}{U}{mathc}{m}{it}
\newcolumntype{L}[1]{>{\raggedright}p{#1}}
\newcolumntype{C}[1]{>{\centering}p{#1}}
\newcolumntype{R}[1]{>{\raggedleft}p{#1}}
\newcolumntype{D}{>{\centering\arraybackslash}X}
\begin{document}
\title{Operational Advantage of Quantum Resources in Subchannel Discrimination}

\author{Ryuji Takagi}
\email{rtakagi@mit.edu}
\affiliation{Center for Theoretical Physics and Department of Physics, Massachusetts Institute of Technology, Cambridge, Massachusetts 02139, USA}
\author{Bartosz Regula}
\email{bartosz.regula@gmail.com}
\affiliation{School of Mathematical Sciences and Centre for the Mathematics and Theoretical Physics of Quantum
Non-Equilibrium Systems, University of Nottingham, University Park, Nottingham NG7 2RD, United Kingdom}
\affiliation{School of Physical and Mathematical Sciences, Nanyang Technological University, 637371, Singapore}
\affiliation{Complexity Institute, Nanyang Technological University, 637335, Singapore}
\author{Kaifeng Bu}
\email{kfbu@fas.harvard.edu}
\affiliation{School of Mathematical Sciences, Zhejiang University, Hangzhou 310027, People's Republic of China}
\affiliation{Department of Physics, Harvard University, Cambridge, Massachusetts 02138, USA}
\author{Zi-Wen Liu}
\email{zwliu@mit.edu}
\affiliation{Perimeter Institute for Theoretical Physics, Waterloo, Ontario N2L 2Y5, Canada}
\affiliation{Center for Theoretical Physics and Department of Physics, Massachusetts Institute of Technology, Cambridge, Massachusetts 02139, USA}
\author{Gerardo Adesso}
\email{gerardo.adesso@nottingham.ac.uk}
\affiliation{School of Mathematical Sciences and Centre for the Mathematics and Theoretical Physics of Quantum
Non-Equilibrium Systems, University of Nottingham, University Park, Nottingham NG7 2RD, United Kingdom}

\begin{abstract}
One of the central problems in the study of quantum resource theories is to provide a given resource with an operational meaning, characterizing physical tasks in which the resource can give an explicit advantage over all resourceless states. We show that this can always be accomplished for all convex resource theories. We establish in particular that any resource state enables an advantage in a channel discrimination task, allowing for a strictly greater success probability than any state without the given resource. Furthermore, we find that the generalized robustness measure serves as an exact quantifier for the maximal advantage enabled by the given resource state in a class of subchannel discrimination problems, providing a universal operational interpretation to this fundamental resource quantifier. We also consider a wider range of subchannel discrimination tasks and show that the generalized robustness still serves as the operational advantage quantifier for several well-known theories such as entanglement, coherence, and magic.
\end{abstract}


\maketitle

{\textit{\textbf{Introduction.}}} ---
A rigorous understanding of quantum resources has been one of the ultimate goals in quantum information science. 
In addition to the apparent theoretical interest, it also has high relevance to burgeoning quantum information technologies such as quantum communication~\cite{kimble2008quantum,Duan2010}, quantum cryptography~\cite{bennett1984quantum,Gisin2002}, and quantum computation~\cite{shor1996fault,Gottensman1998}. 

Quantum resource theories~\cite{chitambar1806quantum} have recently attracted much attention as powerful tools which offer formal frameworks dealing with quantification and manipulation of intrinsic resources associated with quantum systems.
One could consider different theories depending on the relevant physical constraints, and indeed various resource theories have been proposed and analyzed, such as entanglement~\cite{plenio2007introduction,HOrodecki_review2009}, coherence~\cite{aberg2006quantifying,Baumgratz2014,Streltsov2017}, asymmetry~\cite{Gour2008,Marvian2016}, quantum thermodynamics~\cite{Brandao2013,Brandao_secondlaws2015}, non-Markovianity~\cite{wakakuwa2017operational}, magic~\cite{Veitch2014,howard_2017}, and non-Gaussianity~\cite{Genoni2008,Takagi2018,albarelli2018resource}.
Although these resource theories provide deeper insights into their specific physical settings, they do not tell us much about how to understand the individual properties and results in a unified fashion. 
In particular, despite the generality of the resource theoretical framework, only a small number of results reported in the literature are applicable to wide classes of general quantum resource theories~\cite{horodecki_2013,brandao_2015,Liu2017,Gour2017,regula_2018,anshu_2017,lami_2018}. 
In this work, we add a fundamental item to this list with regard to one of the central questions asked in the study of resource theories: the operational characterization of quantum states and the resources they possess.

An essential building block of a resource theory is the set of free states. 
It is the set of states that are considered ``easy to prepare'' in that theory, and any state outside of this set is called a resource state.  
A common and intuitive assumption is that the set of free states should be convex and closed. Convexity reflects a natural attribute in many physical settings, i.e. the fact that losing information about which free state was prepared, hence resulting in a probabilistic mixture of free states, should not by itself generate a resource.
Closedness, on the other hand, corresponds to the fact that the limit of a sequence of quantum states should accurately approximate the statistics of the states in the sequence for all physical experiments \cite{werner_1983}, which in particular implies that simply taking the limit should not create any resource.
To differentiate such theories from the few established resource theories which do not satisfy these constraints, and in particular, do not allow probabilistic mixing as a free operation \cite{Modi2012,Genoni2008,liu2017diagonal}, we will refer to any general theory obeying the conditions of closedness and convexity as a \textit{convex resource theory}.
 
In principle, one could define any set of free states and consider resource quantifiers defined with respect to this set \cite{horodecki_2013,regula_2018,chitambar1806quantum}. 
However, as the word ``resource'' suggests, it is desired that resource states should be useful for something; otherwise, the resource would lose physical significance and merely reduce to a mathematical concept. 
This question of operational characterization is always posed once the theory is proposed, and it is usually highly nontrivial. 
One of the ways to give an operational interpretation is to consider resource distillation~\cite{Bennett1996,Bravyi2005,Winter2016}. 
If a resource state can be distilled to a ``maximally resourceful'' state by free operations, that state can be associated with the tasks that utilize this unit state.    
However, whether there exists such an operational task is theory dependent, and furthermore some states cannot be distilled at all under some choices of free operations --- these are the bound resource states~\cite{Horodecki_bound1998,Campbell_bound2010,Veitch_bound2012,lami_2018,Takagi2018,Zhao_bound2018,lami_2018-1}. 
The latter fact makes the operational characterization even less clear for bound resources, even when the theory is physically well motivated. 

The question of operational significance of quantum resources has been addressed on a case-by-case basis. Of particular interest to us will be the task of channel and subchannel discrimination, a fundamental problem in quantum information theory \cite{kitaev_1997,childs_2000,acin_2001-1,watrous_2018}.
It has been demonstrated that, even without the aid of another state, every entangled state is useful in some channel discrimination task~\cite{Piani2009}, and the amount of entanglement of a state is directly related to its usefulness in channel discrimination~\cite{bae_2016,bae2018more}.
Analogous results have been shown also for steering, coherence and asymmetry~\cite{Piani2015,Napoli2016,Piani2016}, where it was not only shown that every resource state in these theories is useful in a particular subchannel discrimination task, but it was also found that the maximal advantage associated with a given state is exactly quantified by the measure known as the generalized robustness~\cite{vidal_1999,Steiner2003,harrow_2003}.
Although it would be natural to expect similar results to hold in more general cases, the arguments employed in the aforementioned works are specifically tailored to the above theories, and do not immediately generalize to encompass larger classes of resources.

Here, we show that {\it every} resource state in {\it any} convex theory is useful in a channel discrimination task, allowing for a strictly greater probability of success compared to discrimination using a free state, which gives an operational characterization to resource states in a theory-independent fashion.
As a result, we in particular provide an operational meaning to every bound resource state, including bound magic states~\cite{Campbell_bound2010,Veitch_bound2012} (see also Ref.\,\cite{Campbell2011}) as well as bound genuine non-Gaussian states~\cite{Takagi2018,nGnote}.  
We then find that the maximal advantage a resource state can provide in a class of subchannel discrimination problems is exactly quantified by the generalized robustness measure. 
The generalized robustness was first introduced as an entanglement monotone \cite{vidal_1999,Steiner2003,harrow_2003} and recently generalized to every finite-dimensional convex theory~\cite{regula_2018}.
Although the definition of this quantity is based primarily on geometric considerations, it is nevertheless known to admit operational interpretations in specific resource theories.
In the resource theory of coherence, as mentioned above, it characterizes the advantage a coherent state provides in subchannel discrimination tasks related to phase discrimination \cite{Napoli2016,Piani2016,Bu2017}, as well as quantifies the largest fidelity a state can achieve with the maximally coherent state in a single-shot transformation with free operations \cite{Bu2017,regula_2017}. Similarly, the generalized robustness of entanglement corresponds to the largest fidelity achievable with a maximally entangled state under free transformations \cite{regula_2018-1}.
The logarithmic version of this measure, known as the max-relative entropy \cite{datta_2009}, plays an essential role in the characterization of one-shot entanglement dilution \cite{brandao_2010,brandao_2011} and one-shot coherence dilution \cite{zhao_2018}, and quantifies the minimal rate of noise needed to catalytically erase the resource contained in a given state for a wider class of resource theories \cite{berta_2017-1,anshu_2017}.
However, a general operational meaning of the generalized robustness in all convex resource theories was not known.
Our result lifts the generalized robustness to an operationally meaningful measure in any convex resource theory, thus, generalizing and extending hitherto known results.
We finally consider relaxing the constraints placed on allowed measurements in the subchannel discrimination task and show that the maximal advantage is still quantified by the generalized robustness measure for some well-known theories such as entanglement, coherence, and magic.

{\textit{\textbf{All resource states are useful in a channel discrimination task.}}} ---
Let $L(\mathcal{X})$ be the set of linear operators acting on the Hilbert space $\mathcal{X}$, where the latter can be infinite dimensional, and let $\TT(\mathcal{X},\mathcal{Y}) = \{\Phi| \Phi:L(\mathcal{X}) \rightarrow L(\mathcal{Y})\}$ be the set of linear transformations that map the operators on the Hilbert space $\mathcal{X}$ to the operators on the Hilbert space $\mathcal{Y}$. Let $D(\xx)$ be the set of density operators acting on $\xx$, and $\ff(\xx)\subseteq D(\xx)$ be a closed and convex set. 
We say that if $\rho \in \ff(\xx)$, $\rho$ is a free state, and we call $\rho$ a resource state otherwise. 

Let $\{\Psi_i\}$ denote a finite set of subchannels (completely-positive trace-nonincreasing maps) that compose a completely positive trace-preserving (CPTP) map $\Lambda=\sum_i \Psi_i$ where $\Psi_i \in \TT(\xx,\yy)$. We consider a subchannel discrimination task where one is to decide which subchannel was applied to the input state $\rho\in D(\xx)$ by making a measurement on the output under the promise that only one of the subchannels in the set is realized.  
The goal of this task is to choose the best measurement strategy corresponding to a set of positive-operator valued measure (POVM) elements $\{M_i\}$, that maximizes the success probability $p_{\succ}(\{\Psi_i\},\{M_i\},\rho)=\sum_i\Tr(M_i\Psi_i(\rho))$. 
Note that channel discrimination, where one is to discriminate CPTP maps $\{\Lambda_i\}$ each of which is realized at the prior probability $p_i$, is a special case of subchannel discrimination where each subchannel is taken as $\Psi_i=p_i\Lambda_i$.

It was shown in Ref.\,\cite{Piani2009} that every entangled state is useful in a channel discrimination task. Translating this result to the framework of subchannel discrimination, the result says that for any entangled state $\rho$ there exists a channel discrimination task in which the quantity $\max_{\{M_i\}}p_{\succ}(\{\Psi_i\},\{M_i\},\rho)$ is strictly greater than $\max_{\{M_i\}}p_{\succ}(\{\Psi_i\},\{M_i\},\sigma)$ for any $\sigma \in \ff(\xx)$. We show the corresponding result for any choice of a convex and closed $\ff(\xx)$.

\begin{thm}
 Let $\rho \in D(\xx)$. Then, $\rho \notin \ff(\xx)$ if and only if there exist subchannels $\Psi_0,\Psi_1 \in \TT(\mathcal{X},\mathcal{Y})$ such that
 \ba
 \frac{\max_{\{M_i\}}p_{\succ}(\{\Psi_i\},\{M_i\},\rho)}{\sup_{\sigma\in\ff(\xx)}\max_{\{M_i\}}p_{\succ}(\{\Psi_i\},\{M_i\},\sigma)}>1.
 \nonumber
 \ea
\label{thrm:existence}
\end{thm}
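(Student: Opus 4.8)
The plan is to prove both directions via a separating-hyperplane argument, using the convexity and closedness of $\ff(\xx)$ to extract a Hermiticity-preserving linear functional that witnesses the resource, and then to repackage that functional as a two-outcome subchannel discrimination instance. The easy direction is ``if'': whenever such $\Psi_0,\Psi_1$ exist, the state $\rho$ strictly outperforms every free state, so $\rho$ cannot itself be free (the supremum in the denominator is taken over $\ff(\xx)$, and plugging $\sigma=\rho$ would make the ratio equal to $1$). So the substance is the ``only if'' direction: given $\rho\notin\ff(\xx)$, construct the subchannels.

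First I would invoke convex separation. Since $\ff(\xx)$ is closed and convex and $\rho\notin\ff(\xx)$, there is a Hermitian operator $W$ and a real $\lambda$ with $\Tr(W\rho) > \lambda \geq \Tr(W\sigma)$ for all $\sigma\in\ff(\xx)$. Decompose $W = W_+ - W_-$ into positive and negative parts, and rescale so that $W_+ \le \I$ (this is where finiteness of the relevant norms is used; in infinite dimensions one restricts attention to the bounded operators, as the paper's setup allows). The next step is to turn $W_+$ and $W_- $ (suitably normalized) into a pair of subchannels $\Psi_0,\Psi_1$ on $\TT(\xx,\yy)$ summing to a CPTP map. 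A clean way is to let the output space $\yy$ be a classical register (or qubit) and set $\Psi_0(\cdot) = \ketbra{0}{0}\,\Tr\big(E_0\,\cdot\,\big)$ and $\Psi_1(\cdot) = \ketbra{1}{1}\,\Tr\big(E_1\,\cdot\,\big) + (\text{completion})$, where $E_0,E_1$ are built from $W_+$, $\I - W_+$ and the negative part so that $\Psi_0+\Psi_1$ is trace-preserving and the optimal POVM $\{M_i=\ketbra{i}{i}\}$ recovers $\Tr(E_0\rho)+\Tr(E_1\rho)$; arranging the offsets so that this equals an affine function of $\Tr(W\rho)$ that exceeds the corresponding free-state value precisely because $\Tr(W\rho)>\lambda\ge\Tr(W\sigma)$.

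The step I expect to be the main obstacle is showing that the inequality survives the optimization over \emph{all} POVMs for the free states on the right-hand side, i.e.\ that $\sup_{\sigma\in\ff(\xx)}\max_{\{M_i\}} p_\succ$ is still strictly below $\max_{\{M_i\}}p_\succ$ evaluated at $\rho$ — a free state could in principle choose a smarter measurement than $\{\ketbra{i}{i}\}$. The way around this is to design $\Psi_0,\Psi_1$ with a fully dephased (classical) output, so that the output states for any input are diagonal in the $\{\ket{i}\}$ basis; then the measurement $\{M_i = \ketbra{i}{i}\}$ is optimal for every input simultaneously, and $\max_{\{M_i\}} p_\succ(\{\Psi_i\},\{M_i\},\tau) = \Tr(E_0\tau)+\Tr(E_1\tau)$ is an explicit affine functional of $\tau$. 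The strict separation $\Tr(W\rho)>\sup_{\sigma}\Tr(W\sigma)$ then transfers directly, and one only needs to check that the supremum over $\ff(\xx)$ is attained (compactness when $\xx$ is finite-dimensional, or a limiting argument using closedness) so that the ratio is a genuine strict inequality rather than a limiting one. Finally I would note that two subchannels suffice, so the constructed instance is a bona fide (sub)channel discrimination task, completing the proof.
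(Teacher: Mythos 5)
Your proposal is correct and follows essentially the same route as the paper: Hahn--Banach separation yields a witness, which is shifted/rescaled into a positive operator and encoded as a binary measure-and-prepare discrimination task with classical two-dimensional output, where the computational-basis measurement is simultaneously optimal for all inputs (the paper secures this via the Holevo--Helstrom theorem applied to $\{\tfrac12\Lambda_0,\tfrac12\Lambda_1\}$ with $X=I-W/\|W\|_\infty\geq 0$, which makes the optimized success probability affine in the input state, exactly the point you flag as the main obstacle). The only difference is cosmetic: you start from the decomposition $W=W_+-W_-$ and leave the completion term schematic, whereas the paper writes the two channels explicitly in terms of a single PSD operator $X$.
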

\begin{proof}
The ``if`` direction of the Theorem is immediate.
For the other direction, note first by the Hahn-Banach separation theorem~\cite{hahn_banach} that, for any density operator $\rho\notin \ff(\xx)$, there exists a bounded self-adjoint operator $W\in L(\xx)$ such that $\forall \sigma\in \ff(\xx),\ \Tr(\sigma W) \geq 0$ and $\Tr(\rho W)<0$; conversely, if such an operator $W$ exists, then $\rho$ must be outside of the set $\ff(\xx)$.
We shall show that one can always construct two channels with equal prior probability such that $\rho$ gives an advantage in discriminating them. To this end, take another self-adjoint operator $X\in L(\xx)$ defined by $X=I-W/\|W\|_{\infty}\geq 0$ satisfying $\Tr(\rho X) > 1$ and $0\leq \Tr(\sigma X) \leq 1 \; \forall \sigma\in \ff(\xx)$, and consider the two maps $\Lambda_0, \Lambda_1 \in \TT(\mathcal{X}, \mathcal{Z})$ defined as
\bann
\Lambda_0(\eta) &\coloneqq& \left( \frac{\Tr(\eta)}{2} + \frac{\Tr(\eta X)}{2 \|X\|_\infty} \right) \dm{0} + \left( \frac{\Tr(\eta)}{2} - \frac{\Tr(\eta X)}{2\|X\|_\infty} \right) \dm{1}\\
\Lambda_1(\eta) &\coloneqq& \left( \frac{\Tr(\eta)}{2} - \frac{\Tr(\eta X)}{2 \|X\|_\infty} \right) \dm{0} + \left( \frac{\Tr(\eta)}{2} + \frac{\Tr(\eta X)}{2\|X\|_\infty} \right) \dm{1}
\eann
where $\mathcal{Z}$ is any Hilbert space of at least two dimensions containing the mutually orthogonal vectors $\{\ket{0}, \ket{1}\}$. It is straightforward to verify that $\Lambda_0, \Lambda_1$ are both completely positive trace-preserving maps, and thus valid quantum channels. Notice now that for any state $\rho$ we have $\|(\Lambda_0-\Lambda_1)[\rho]\|_{1} = 2\Tr(\rho X)/\|X\|_{\infty}$, which implies
\bann
 \begin{cases} \|(\Lambda_0-\Lambda_1)[\rho]\|_{1} \leq 2/\|X\|_{\infty}& \rho \in \ff(\xx),\\ \|(\Lambda_0-\Lambda_1)[\rho]\|_{1} > 2/\|X\|_{\infty} & \rho \notin \ff(\xx).\end{cases}
\eann
Consider now the task of discriminating the subchannel ensemble $\{\frac12 \Lambda_0, \frac12 \Lambda_1\}$, for which the maximal success probabiltiy is given by $\max_{\{M_i\}}p_{\succ}(\{\frac12 \Lambda_0, \frac12 \Lambda_1\},\{M_i\},\rho)=\frac{1}{2}(1+\|(\Lambda_0-\Lambda_1)[\rho]\|_{1}/2)$ by the Holevo--Helstrom theorem \cite{HOLEVO1973337,Helstrom}. The statement then follows immediately by noticing that for $\rho\notin\ff(\xx)$ and any $\sigma\in\ff(\xx)$ we have $\|(\Lambda_0-\Lambda_1)[\rho]\|_{1} > 2/\|X\|_{\infty} \geq \|(\Lambda_0-\Lambda_1)[\sigma]\|_{1}$.
\end{proof}

When $\xx$ is finite dimensional, the supremum in the statement of the Theorem is always attained.

We remark that the example subchannel discrimination task considered in the proof of the Theorem is in fact a binary channel discrimination problem, thus showing an advantage of any resource in the discrimination of quantum channels specifically.

This result is useful in the task of resource certification, where experimenters are to confirm that they truly possess a resource state. 
Indeed, the channel considered here has a direct connection to the witness operator that separates the resource state from the set of free states. This connection allows for another operational way of detecting a resource state in terms of channel discrimination, besides directly measuring the witness observable. Notably, due to the generality of the Theorem, this extends beyond the entanglement certification~\cite{guhne_2009,friis2018entanglement} to certifying other resources such as coherence, genuine non-Gaussianity, and magic.  

We further note that, by considering the assistance of ancillary systems, one could think of a more general setting where $\max_{\{M_i\}}p_{\succ}(\{\Psi_i\},\{M_i\},\rho)$ is compared to $\sup_{\sigma\in\ff(\xx\otimes\yy)}\max_{\{\tilde{M}_i\}}p_{\succ}(\{\Psi_i\otimes I\},\{\tilde{M}_i\},\sigma)$ where input free states are defined in the extended Hilbert space $\xx\otimes\yy$, and correspondingly $\{\tilde{M}_i\}$ is the set of POVMs acting on $\xx\otimes\yy$.
If $\ff(\xx\otimes\yy)$ allows for the entanglement between $\xx$ and $\yy$, the entanglement in the free states may help to distinguish the subchannels.
It is then not clear whether the same conclusion would still hold, as there might be a trade-off between the advantage provided by the resource in $\rho$ and the entanglement in $\sigma\in\ff(\xx\otimes\yy)$, which could be highly theory dependent. To consider explicitly the advantage provided by the resource itself, in this work we focus on the characterization of the resource in $\rho$ with respect to $\ff(\xx)$, but the above extension would certainly be interesting on its own and worth further study. 

{\textit{\textbf{Robustness as the advantage in subchannel discrimination.}}}--- 
Let $\xx$ be a Hilbert space with ${\rm dim}\xx = d < \infty$. Any closed convex set $\ff(\xx)\subseteq D(\xx)$ comes with the generalized robustness measure $R_{\ff(\xx)}:D(\xx) \rightarrow \mathbb{R}_+$ defined as 
\ba
R_{\ff(\xx)}(\rho)=\min_{\tau\in D(\xx)}\left\{s\left|\,\frac{\rho + s\tau}{1+s}\in \ff(\xx)\right.\right\},
\ea
It can also be obtained as the optimal value of the following convex optimization problem (see, e.g., Refs.\,\cite{brandao_2005,regula_2018}):
\ba
&{\text{\rm maximize}}& \ \ \Tr(\rho X) - 1 \label{eq:SDP_obj}\\
&{\text{\rm subject to}}&\ \ X\geq 0 \label{eq:SDP_cond1}\\
&& \ \ \Tr(\sigma X)\leq 1\ \forall \sigma \in \ff(\xx). \label{eq:SDP_cond2}
\ea
Note that the robustness can be infinite for some $\rho$ if $\ff(\xx)$ is composed only of rank-deficient states. Although our results can also be extended to such cases, we will hereafter assume that $\ff(\xx)$ contains at least one full-rank state for simplicity.

We shall find that the generalized robustness with respect to any choice of $\ff(\xx)$ allows for an operational interpretation: it serves as an exact quantifier for the advantage that a given state enables in a certain class of subchannel discrimination problems. Precisely, recall that the success probability in the discrimination of a set of subchannels $\{\Psi_i\}$ with the measurement strategy $\{M_i\}$ is given by $p_{\succ}(\{\Psi_i\},\{M_i\},\rho)=\sum_i\Tr(M_i\Psi_i(\rho))$. We will quantify the advantage that a quantum state $\rho$ provides over all free states $\ff(\xx)$ in the discrimination of $\{\Psi_i\}$ using the measurement strategy $\{M_i\}$ as the ratio of $p_{\succ}(\{\Psi_i\},\{M_i\},\rho)$ to the best success probability when using a free state, $\max_{\sigma \in \ff(\xx)} p_{\succ}(\{\Psi_i\},\{M_i\},\sigma)$.
The following result shows explicitly that, in any convex resource theory, the maximal such ratio optimized over all choices of sets of subchannels and measurement strategies is given precisely by the generalized robustness.
\begin{thm}
For any $\rho\in D(\xx)$,
\bann
 \max_{\{\Psi_i\},\{M_i\}} \frac{p_{\succ}(\{\Psi_i\},\{M_i\},\rho)}{\max_{\sigma \in \ff(\xx)} p_{\succ}(\{\Psi_i\},\{M_i\},\sigma)} = 1 + R_{\ff(\xx)}(\rho)
\eann
\label{thrm:rob_fixed}
\end{thm}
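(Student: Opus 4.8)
The plan is to prove the stated equality as two matching inequalities, using the variational form of the robustness (the minimization over $\tau$) for the upper bound and its dual form \eq{SDP_obj}--\eq{SDP_cond2} for the lower bound. For ``$\le$'': for any fixed choice of subchannels $\{\Psi_i\}$ and POVM $\{M_i\}$, the functional $\eta\mapsto p_{\succ}(\{\Psi_i\},\{M_i\},\eta)=\sum_i\Tr(M_i\Psi_i(\eta))$ is linear in $\eta$ and sends positive semidefinite operators to nonnegative numbers, since each $\Psi_i$ is completely positive (hence positive) and each $M_i\ge 0$. Setting $s=R_{\ff(\xx)}(\rho)$ and writing $\rho=(1+s)\sigma-s\tau$ with $\sigma=(\rho+s\tau)/(1+s)\in\ff(\xx)$ and $\tau\in D(\xx)$ (the minimum defining $R_{\ff(\xx)}$ is attained under the standing full-rank assumption), linearity together with $p_{\succ}(\{\Psi_i\},\{M_i\},\tau)\ge 0$ gives $p_{\succ}(\{\Psi_i\},\{M_i\},\rho)\le(1+s)\,p_{\succ}(\{\Psi_i\},\{M_i\},\sigma)\le(1+s)\max_{\sigma'\in\ff(\xx)}p_{\succ}(\{\Psi_i\},\{M_i\},\sigma')$; dividing and maximizing over all tasks yields the bound $1+R_{\ff(\xx)}(\rho)$.

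For ``$\ge$'' I would exhibit an explicit binary task that saturates this bound. Let $X\ge 0$ be an optimal solution of \eq{SDP_obj}--\eq{SDP_cond2} (optimality attained by strong duality under the full-rank assumption, cf.\ Refs.\,\cite{brandao_2005,regula_2018}), so that $\Tr(\sigma X)\le 1$ for all $\sigma\in\ff(\xx)$ and $\Tr(\rho X)=1+R_{\ff(\xx)}(\rho)$. Put $Y:=X/\|X\|_\infty$, so $0\le Y\le I$; let $\yy$ be two-dimensional with orthonormal basis $\ket{0},\ket{1}$; and define $\Psi_0,\Psi_1\in\TT(\xx,\yy)$ by
\[
\Psi_0(\eta):=\tfrac12\bigl[\Tr(\eta Y)\dm{0}+\Tr(\eta(I-Y))\dm{1}\bigr],\qquad
\Psi_1(\eta):=\tfrac12\bigl[\Tr(\eta(I-Y))\dm{0}+\Tr(\eta Y)\dm{1}\bigr],
\]
together with the POVM $M_0:=\dm{0}$, $M_1:=\dm{1}$. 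One checks that each $\Psi_i$ is completely positive — the identity $\Tr(\eta P)\dm{j}=\sum_k(\ket{j}\bra{k}\sqrt{P})\,\eta\,(\ket{j}\bra{k}\sqrt{P})^\dagger$ for $P\in\{Y,\,I-Y\}$ exhibits a Kraus form — and trace-nonincreasing ($\Tr\Psi_i(\eta)=\tfrac12\Tr\eta$), that $\Psi_0+\Psi_1$ is CPTP, and that $M_0+M_1=I_\yy$. A direct computation then gives $p_{\succ}(\{\Psi_i\},\{M_i\},\eta)=\Tr(\eta Y)$ for every state $\eta$, so the advantage of this particular task equals $\Tr(\rho X)\big/\max_{\sigma\in\ff(\xx)}\Tr(\sigma X)\ge\Tr(\rho X)=1+R_{\ff(\xx)}(\rho)$, using $\Tr(\sigma X)\le 1$. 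Combined with the upper bound this forces equality (and, as a byproduct, $\max_{\sigma\in\ff(\xx)}\Tr(\sigma X)=1$).

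The individual steps are short, so I expect the main obstacle to be mostly bookkeeping: verifying that $\Psi_0,\Psi_1$ really form a legitimate subchannel decomposition (complete positivity, trace-nonincrease, and trace preservation of the sum) and the chosen $\{M_i\}$ a legitimate POVM, and invoking strong duality / attainment for the robustness program, which is precisely where the hypothesis that $\ff(\xx)$ admits a full-rank state enters. If one wishes to avoid relying on attainment of the optimal witness, running the identical calculation with an $\epsilon$-optimal feasible $X$ shows the supremum of the advantage over all tasks is at least $1+R_{\ff(\xx)}(\rho)-\epsilon$ for every $\epsilon>0$, which together with the upper bound again yields the claimed identity.
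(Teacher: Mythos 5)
Your proof is correct, and the upper-bound half is identical to the paper's (decompose $\rho=(1+s)\sigma-s\tau$ and use linearity plus positivity of the success probability). The achievability half, however, takes a genuinely different route. The paper diagonalizes the optimal witness $X=\sum_j x_j\dm{e_j}$, builds $d$ unitary subchannels $\Psi_i(\cdot)=\frac1d U_i(\cdot)U_i^\dagger$ with $\sum_i U_i\dm{e_j}U_i^\dagger=I$, and measures with the rotated witnesses $M_i=U_iXU_i^\dagger/\Tr(X)$, yielding the same ratio $\Tr(\rho X)/\max_\sigma\Tr(\sigma X)$. You instead encode the witness directly into a binary measure-and-prepare pair $\Psi_0,\Psi_1$ with a trivial computational-basis POVM --- essentially upgrading the construction from the paper's Theorem~\ref{thrm:existence} so that it saturates the robustness exactly rather than merely witnessing an advantage. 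Your version buys something the paper's does not state explicitly at this point: the optimal advantage $1+R_{\ff(\xx)}(\rho)$ is already attained by an equal-prior \emph{binary channel} discrimination task with a fixed two-outcome measurement, and it unifies Theorems~\ref{thrm:existence} and~\ref{thrm:rob_fixed} under one construction. The paper's unitary construction buys structure that your destructive channels lack: because the $U_i$ merely rotate the input, the free states can be made (near-)invariant under the task, which is exactly what the authors exploit later in Proposition~\ref{prop:rob_gen_cond}, Theorem~\ref{thrm:rob_ent}, and the $T$-state example, where the measurement is optimized independently for each input; your $\Psi_i$ collapse every input onto a fixed qubit and would not support those extensions. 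Your Kraus decomposition, the CPTP check for $\Psi_0+\Psi_1$, and the $\epsilon$-optimal fallback for witness attainment are all sound; the only unstated (but harmless) points are that $\|X\|_\infty>0$ since $\Tr(\rho X)\ge 1$, and that $\max_{\sigma}\Tr(\sigma X)>0$ so the ratio is well defined --- the paper glosses over the same point.
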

\begin{proof}
It can be easily shown that the left-hand side is less than or equal to the right-hand side as follows.
Recalling the definition of the generalized robustness, there exist $\tau\in D(\xx)$ and $\sigma\in \ff(\xx)$ such that $\rho = (1+R_{\ff(\xx)}(\rho))\sigma - R_{\ff(\xx)}(\rho)\tau$. Then, for any $\{\Psi_i\}$ and $\{M_i\}$,
\begin{equation}\begin{aligned}
 &p_{\succ}(\{\Psi_i\},\{M_i\},\rho) = \sum_i \Tr[M_i\Psi_i(\rho)] \\
 &\quad\leq (1+R_{\ff(\xx)}(\rho))\sum_i\Tr\left[M_i\Psi_i(\sigma)\right]\\
 &\quad\leq (1+R_{\ff(\xx)}(\rho)) \max_{\sigma \in \ff(\xx)} p_{\succ}(\{\Psi_i\},\{M_i\},\sigma).
\end{aligned}\end{equation}
Thus, it suffices to show that for any $\rho$, there exist $\{\Psi_i\}$ and $\{M_i\}$ such that $\frac{p_{\succ}(\{\Psi_i\},\{M_i\},\rho)}{\max_{\sigma\in\ff(\xx)} p_{\succ}(\{\Psi_i\},\{M_i\},\sigma)} \geq 1+ R_{\ff(\xx)}(\rho)$.
Let $X\in L(\xx)$ be an operator satisfying \eq{SDP_cond1} and \eq{SDP_cond2}. 
Let us write $X$ in its spectral decomposition as
$X=\sum_{i=1}^d x_i \dm{e_i}$
where $\{\ket{e_i}\}_{i=1}^d$ forms an orthonormal basis of $\xx$ and each $x_i \geq 0$. 
Consider now a set of unitaries $\{U_i\}_{i=1}^d$ such that 
$\sum_i U_i\dm{e_j}U_i^{\dagger} = I\ \forall j$ ---
the choice of such a set of unitaries is not unique, but there always exists one because we can, for instance, take $U_l \coloneqq \sum_{j=1}^d\ketbra{e_{j+l}}{e_j}$.
Now, consider the subchannels $\{\Psi_i\}$ defined by $\Psi_i(\cdot)=\frac{1}{d}U_i(\cdot)U_i^{\dagger}$ and measurement $\{M_i\}$ defined by $M_i=U_iXU_i^{\dagger}/\Tr(X)$. 
${M_i}$ is a valid POVM because $M_i\geq 0$ due to $X\geq 0$, and 
$\sum_iM_i = \frac{1}{\Tr(X)}\sum_{i} \sum_{j=1}^d x_j U_i\dm{e_j}U_i^{\dagger} 
 = \frac{1}{\Tr(X)}\sum_{j=1}^d x_j I = I.$ 
 This choice of subchannels and measurement gives $p_{\succ}(\{\Psi_i\},\{M_i\},\rho)=\Tr(\rho X)/\Tr(X)$ and
\bann
  \frac{p_{\succ}(\{\Psi_i\},\{M_i\},\rho)}{\max_{\sigma\in\ff(\xx)} p_{\succ}(\{\Psi_i\},\{M_i\},\sigma)} &=& \frac{\Tr(\rho X)}{\max_{\sigma\in\ff(\xx)}\Tr(\sigma X)}\\ &\geq& \Tr(\rho X).
\eann
The last inequality is due to \eq{SDP_cond2}.
The optimal $X$ satisfying \eq{SDP_obj}, \eq{SDP_cond1}, \eq{SDP_cond2} realizes $\Tr(\rho X)=1+R_{\ff(\xx)}(\rho)$, which concludes the proof.
\end{proof}

The generality of the result allows one to apply this to a variety of settings, and extends the operational connection between subchannel discrimination and resource witnesses to the so-called quantitative witnesses \cite{brandao_2005,eisert_2007,regula_2018}.
To exemplify the applicability of the Theorem, in the Supplemental Material we relate the result to an explicit physical problem of detecting the noise introduced by the application of a non-Clifford gate, of practical relevance for fault-tolerant quantum computation \cite{sm}.

{\textit{\textbf{Relaxation of measurement constraints.}}} ---
The result of \thrm{rob_fixed} gives an operational meaning to the generalized robustness in a very general fashion. 
However, one may also be interested in less restrictive settings of subchannel discrimination, where the measurement strategies for $\rho$ and for any free state $\sigma$ can be chosen independently.

Let us first consider the most general situation where, for each state, the experimenters can choose any set of POVMs acting on $\xx$. 
This relaxation makes the comparison much more subtle because different free-state inputs can be paired with different optimal measurements. 
For the resource theories of coherence and asymmetry, it was shown that the robustness still serves as a quantifier for the advantage in this setting~\cite{Napoli2016,Piani2016,Bu2017}. The proofs of these results rely on the simple structure of the two resources, allowing one to choose the set of subchannels in a way such that all free states remain invariant under the application of any subchannel, removing the need to explicitly maximize over all the measurement strategies. In fact, this can be used to establish a sufficient condition imposed at a more abstract level that allows this relation to hold in other resource theories; we formalize it as follows. Full proofs of the results in this section are provided in the Supplemental Material \cite{sm}.
\begin{pro}
Suppose $\rho\in D(\xx)$, and let $X=\sum_j x_j\dm{e_j}$ be the optimal witness in Eq.\eq{SDP_obj} for $\rho$.
If there exists a set of unitaries $\{U_i\}_{i=1}^d$ such that $\sum_i U_i\dm{e_j}U_i^{\dagger}=I,\forall j$ and $U_i\sigma U_i^{\dagger} = U_j\sigma U_j^{\dagger},\forall \sigma\in\ff(\xx),\forall i,j$, then 
\bann
 \max_{\{\Psi_i\}} \frac{\max_{\{M_i\}}p_{\succ}(\{\Psi_i\},\{M_i\},\rho)}{\max_{\sigma \in \ff(\xx),\{M_i\}} p_{\succ}(\{\Psi_i\},\{M_i\},\sigma)} = 1 + R_{\ff(\xx)}(\rho).
\label{eq:rob_gen_cond}
\eann
\label{prop:rob_gen_cond}
\end{pro}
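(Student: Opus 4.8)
The plan is to prove the two inequalities separately. The bound ``$\le$'' is obtained exactly as in the proof of \thrm{rob_fixed}: from the definition of the generalized robustness, write $\rho=(1+R_{\ff(\xx)}(\rho))\sigma-R_{\ff(\xx)}(\rho)\tau$ with $\sigma\in\ff(\xx)$ and $\tau\in D(\xx)$. For any subchannels $\{\Psi_i\}$, let $\{M_i^\star\}$ attain the numerator; since $\sum_i\Tr[M_i^\star\Psi_i(\tau)]\ge 0$ by complete positivity, one has $p_{\succ}(\{\Psi_i\},\{M_i^\star\},\rho)\le(1+R_{\ff(\xx)}(\rho))\,p_{\succ}(\{\Psi_i\},\{M_i^\star\},\sigma)$, and because $\{M_i^\star\}$ is a feasible (if not optimal) choice in the denominator's maximization, this gives $\max_{\{M_i\}}p_{\succ}(\{\Psi_i\},\{M_i\},\rho)\le(1+R_{\ff(\xx)}(\rho))\max_{\sigma\in\ff(\xx),\{M_i\}}p_{\succ}(\{\Psi_i\},\{M_i\},\sigma)$ for every $\{\Psi_i\}$, whence ``$\le$'' after maximizing over $\{\Psi_i\}$.

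For ``$\ge$'' I would exhibit one explicit task built from the unitaries $\{U_i\}_{i=1}^d$ supplied by the hypothesis. Set $\Psi_i(\cdot)=\frac{1}{d}U_i(\cdot)U_i^\dagger$; these are legitimate subchannels composing the CPTP map $\frac1d\sum_iU_i(\cdot)U_i^\dagger$. The invariance condition $U_i\sigma U_i^\dagger=U_j\sigma U_j^\dagger=:\sigma'$ makes each $\Psi_i$ send every $\sigma\in\ff(\xx)$ to the same operator $\frac1d\sigma'$, so $p_{\succ}(\{\Psi_i\},\{M_i\},\sigma)=\frac1d\Tr[(\sum_iM_i)\sigma']=\frac1d$ for every POVM and every free $\sigma$; the denominator therefore collapses to exactly $1/d$. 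For the numerator I would plug in the POVM $M_i=U_iXU_i^\dagger/\Tr X$, with $X=\sum_j x_j\dm{e_j}$ the optimal witness of \eq{SDP_obj}, which is valid because $\sum_iM_i=\frac{1}{\Tr X}\sum_j x_j\sum_iU_i\dm{e_j}U_i^\dagger=\frac{1}{\Tr X}\sum_j x_j I=I$ by the resolution-of-identity hypothesis; a short computation then yields $p_{\succ}(\{\Psi_i\},\{M_i\},\rho)=\Tr(\rho X)/\Tr X=(1+R_{\ff(\xx)}(\rho))/\Tr X$. Hence this single task already certifies a ratio of at least $d\,(1+R_{\ff(\xx)}(\rho))/\Tr X$.

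It then suffices to prove $\Tr X\le d$, which is exactly what makes the last quantity $\ge 1+R_{\ff(\xx)}(\rho)$. I would obtain this by feeding both hypotheses into the witness constraint \eq{SDP_cond2}: from $\sigma=U_i^\dagger\sigma' U_i$ we get $\Tr(\sigma X)=\Tr(\sigma'\,U_iXU_i^\dagger)$ for each $i$; averaging over $i$ and using the twirl identity $\frac1d\sum_iU_iXU_i^\dagger=\frac1d\sum_j x_j\sum_iU_i\dm{e_j}U_i^\dagger=\frac{\Tr X}{d}I$ gives $\Tr(\sigma X)=\Tr X/d$ for every $\sigma\in\ff(\xx)$, so the feasibility constraint $\Tr(\sigma X)\le 1$ forces $\Tr X\le d$. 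Then the ratio above is $\ge d(1+R_{\ff(\xx)}(\rho))/\Tr X\ge 1+R_{\ff(\xx)}(\rho)$, and together with ``$\le$'' the equality follows. One also records the trivialities: $\Tr X>0$ (otherwise $X=0$ contradicts $\Tr(\rho X)=1+R_{\ff(\xx)}(\rho)\ge 1$), and that the supremum over free states is attained in finite dimension.

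I expect the bound $\Tr X\le d$ to be the crux of the argument. In \thrm{rob_fixed} the factor $\Tr X$ cancels between numerator and denominator, but once the denominator is relaxed it is pinned to $1/d$ while the numerator still carries a factor $1/\Tr X$, so without control on $\Tr X$ the construction only gives a ratio $d(1+R_{\ff(\xx)}(\rho))/\Tr X$ that can be strictly below $1+R_{\ff(\xx)}(\rho)$. It is precisely the conjunction of the two hypotheses---the invariance condition making conjugation by the $U_i$ a symmetry of $\ff(\xx)$, and the resolution of identity turning $\frac1d\sum_iU_iXU_i^\dagger$ into a multiple of $I$---that fixes $\Tr(\sigma X)$ to the single value $\Tr X/d$ and thereby rescues the estimate.
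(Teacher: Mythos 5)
Your proposal is correct and follows essentially the same route as the paper: the same subchannels $\Psi_i(\cdot)=\frac{1}{d}U_i(\cdot)U_i^{\dagger}$ and POVM $M_i=U_iXU_i^{\dagger}/\Tr(X)$, with the upper bound obtained from the robustness decomposition exactly as in Theorem~\ref{thrm:rob_fixed}. The only cosmetic difference is in the denominator: the paper invokes the measurement-optimality condition of Eq.~\eq{state_cond2} and then bounds the free-state success probability $\Tr(\sigma X)/\Tr(X)\leq 1/\Tr(X)$ directly from feasibility, whereas you observe that every measurement yields exactly $1/d$ and repackage the same feasibility constraint as $\Tr X\leq d$ --- an algebraically equivalent step, and arguably a slightly more self-contained one since it avoids the Holevo--Yuen--Kennedy--Lax lemma altogether.
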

One can easily verify that, for instance, coherence theory satisfies this condition, which recovers the result in \cite{Napoli2016,Piani2016,Bu2017}.

It could perhaps seem that one cannot expect the same relation to hold for theories with a more complex structure, as in general the measurement strategies could be chosen in a way which leads to better success probability with free states. However, rather surprisingly, it turns out that the robustness still acts as the exact quantifier of the operational advantage in this general setting in the resource theory of entanglement.
\begin{thm}
\label{thrm:rob_ent}
Let $\ff(\xx)={\rm SEP}(\xx_1\otimes\xx_2)$ where ${\rm SEP}(\xx_1\otimes\xx_2)$ is the set of separable states with respect to the bipartition between $\xx_1$ and $\xx_2$. 
Then, for any $\rho\in D(\xx_1\otimes\xx_2)$,
\bann
 \max_{\{\Psi_i\}} \frac{\max_{\{M_i\}}p_{\succ}(\{\Psi_i\},\{M_i\},\rho)}{\max_{\sigma \in \ff(\xx),\{M_i\}} p_{\succ}(\{\Psi_i\},\{M_i\},\sigma)} = 1 + R_{\ff(\xx)}(\rho).
\label{eq:rob_ent}
\eann
\end{thm}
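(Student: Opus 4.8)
The $\le$ direction is already contained in the proof of \thrm{rob_fixed}: writing $\rho=(1+R_{\ff(\xx)}(\rho))\sigma-R_{\ff(\xx)}(\rho)\tau$ with $\sigma\in\ff(\xx)$, one has $p_{\succ}(\{\Psi_i\},\{M_i\},\rho)\le(1+R_{\ff(\xx)}(\rho))\,p_{\succ}(\{\Psi_i\},\{M_i\},\sigma)$ for every $\{\Psi_i\}$ and every POVM $\{M_i\}$; taking $\{M_i\}$ optimal for $\rho$ and bounding the right-hand side by its maximum over free states gives the inequality. This step never uses that both sides employ the same measurement, so it applies verbatim in the relaxed setting, and the whole task is the matching lower bound.

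For the lower bound the plan is to restrict to subchannels that act trivially on one factor, $\Psi_i=\id_{\xx_1}\otimes\mathcal{N}_i$ with $\{\mathcal{N}_i\}$ a quantum instrument on $\xx_2$ (subchannels summing to a channel), and exploit two simplifications. First, $\sigma\mapsto\max_{\{M_i\}}p_{\succ}(\{\Psi_i\},\{M_i\},\sigma)$ is convex, so its maximum over ${\rm SEP}(\xx_1\otimes\xx_2)$ is attained at a pure product state $\ketbra{a}{a}\otimes\ketbra{b}{b}$; for such an input the $\xx_1$-register carries no information about the outcome, so it can be discarded and the denominator collapses to $\max_{\ket{b},\{\tilde M_i\}}\sum_i\Tr[\tilde M_i\mathcal{N}_i(\ketbra{b}{b})]$, the best ancilla-free discrimination of $\{\mathcal{N}_i\}$. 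Second, the numerator is exactly the best discrimination of $\{\mathcal{N}_i\}$ when $\xx_1$ is kept as a probe prepared together with $\xx_2$ in the state $\rho$. Hence it suffices to find an instrument on $\xx_2$ for which attaching the (correlated) probe $\rho$ multiplies the success probability by at least $1+R_{\ff(\xx)}(\rho)$ — the quantitative refinement of the statement that all entangled states aid channel discrimination~\cite{Piani2009}.

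I would build this instrument from the optimal robustness witness. Pick $X\ge0$ with $\bra{a,b}X\ket{a,b}\le1$ for all product vectors and $\Tr(\rho X)=1+R_{\ff(\xx)}(\rho)$ (the optimizer of the program \eq{SDP_obj}--\eq{SDP_cond2} for $\ff(\xx)={\rm SEP}(\xx_1\otimes\xx_2)$). Up to exchanging the two factors, $X$ is the Choi operator of a completely positive map $\xx_2\to\xx_1$; the product-vector bound forces a fixed rescaling of this map to be trace-non-increasing, so it is a legitimate subchannel $\mathcal{N}_0$, which I complete to an instrument $\{\mathcal{N}_0,\mathcal{N}_1\}$ (alternatively, one may spread $X$ over its spectrum into an instrument whose Kraus operators are the matricized eigenvectors of $X$; for $\rho$ maximally entangled the Weyl--Heisenberg instrument already works). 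Evaluating the probe-assisted numerator with a generalized Bell-basis measurement on the output reproduces, up to an overall normalization, $\Tr(\rho X)=1+R_{\ff(\xx)}(\rho)$, while the ancilla-free denominator reproduces the same normalization with no extra factor; combined with the reduction above this yields the lower bound.

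The hard part is exactly this last comparison: proving that the ancilla-free (equivalently, best free-state) success probability for the witness-derived instrument is smaller than the probe-assisted one by precisely the factor $1+R_{\ff(\xx)}(\rho)$. This is a data-hiding statement calibrated by the robustness, and — unlike in coherence, where \prop{rob_gen_cond} applies because the free states are left invariant by the subchannels and the measurement optimization becomes trivial — here there is no such symmetry, so one must bound the optimization over all single-system measurements directly, using only $\bra{a,b}X\ket{a,b}\le1$. (Trying instead to write the full problem as a conic program and dualize against \eq{SDP_obj}--\eq{SDP_cond2} is hampered by the ratio not being jointly concave in the subchannels and the measurement, which is why I would go through the explicit construction.)
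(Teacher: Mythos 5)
Your upper-bound argument and your two reductions of the denominator (convexity forces a pure product input; for subchannels acting trivially on $\xx_1$ the first register carries no information) are correct, but the proof has a genuine gap at exactly the point you flag yourself: you never establish that the ancilla-free success probability for the witness-derived instrument falls short of the probe-assisted one by precisely the factor $1+R_{\ff(\xx)}(\rho)$. This is not a loose end --- it is the entire content of the lower bound. The witness condition $\bra{a}\!\otimes\!\bra{b}\,X\,\ket{a}\!\otimes\!\ket{b}\le 1$ by itself gives no obvious handle on the optimization over all input states $\ket{b}$ and all POVMs on $\xx_2$ for an instrument whose Choi operator is a rescaling of $X$; moreover that condition only yields $\Tr_{\xx_1}X\le(\dim\xx_1)\,I_{\xx_2}$, so the rescaling needed to make the map trace-nonincreasing and the claimed cancellation ``up to an overall normalization'' in the numerator both require bookkeeping you have not done.

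The paper closes this gap by a different route worth comparing with yours. Rather than feeding the optimal witness directly into a Choi construction, it invokes the result of Ref.~\cite{regula_2018-1} that $1+R_{\ff(\xx)}(\rho)=d\,\Tr[\Lambda(\rho)\dm{\Phi^+_d}]$ for some separability-preserving channel $\Lambda$ (with $d=\dim\xx_2$), and builds the subchannels $\Psi_i(\cdot)=\tfrac{1}{d^2}(I\otimes P_i)\Lambda(\cdot)(I\otimes P_i)^{\dagger}$ from the Weyl--Heisenberg operators. The numerator $(1+R_{\ff(\xx)}(\rho))/d$ is read off with the Bell-basis measurement, while the denominator becomes tractable precisely because (i) $\Lambda$ maps separable states to separable states and can therefore be absorbed into the optimization over $\sigma$, and (ii) the remaining problem reduces to discriminating the single-system ensemble $\{\tfrac{1}{d^2},\,P_i\ketbra{\phi}{\phi}P_i\}$, whose optimal success probability is shown to equal $1/d$ via the Holevo--Yuen--Kennedy--Lax optimality conditions. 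In effect the non-entangling channel $\Lambda$ reduces the general case to the maximally entangled one, where --- as you correctly observe --- the Weyl--Heisenberg instrument works; without that reduction, or a substitute for it, your construction does not yield a complete proof. Note also that the paper's subchannels are global (they contain $\Lambda$), so your restriction to local instruments $\id_{\xx_1}\otimes\mathcal{N}_i$ is the strictly harder variant treated independently in Ref.~\cite{bae2018more}.
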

  
One may then wonder if it is possible to extend this property to other resource theories. 
However, it appears that a possible generalization of \thrm{rob_ent} to other resources is rather nontrivial, even in the simplest cases such as single-qubit magic theory. The subtlety lies in upper bounding the denominator of the statement, which is maximized over all the possible input free states and measurements.
To remedy this, we consider a more restrictive, but still natural, situation where experimenters are free to choose independent measurement strategies but are constrained to use {\it free measurements} \cite{matthews_2010}.
We call a measurement constructed by the POVMs $\{M_i\}$ a free measurement if all the POVM elements are proportional to some free state, namely, $M_i\propto \sigma_i\ \forall i$ for $\sigma_i\in \ff(\xx)$. 
Under this restriction, we first find that the generalized robustness remains an exact quantifier for the resource theory of coherence.
\begin{pro}
\label{prop:rob_coh}
Let $\ff(\xx)=\mathcal{I}(\xx)$ where $\mathcal{I}(\xx)$ is the set of incoherent states with some preferred basis and $\mm_\ff$ be the set of free measurements with respect to $\ff(\xx)$. For any $\rho\in D(\xx)$,
\begin{equation}\begin{aligned}
 &\max_{\{\Psi_i\}} \frac{\max_{\{M_i\}\in \mm_\ff}p_{\succ}(\{\Psi_i\},\{M_i\},\rho)}{\max_{\sigma \in \ff(\xx),\{M_i\}\in \mm_\ff} p_{\succ}(\{\Psi_i\},\{M_i\},\sigma)} \nonumber\\ &= 1 + R_{\ff(\xx)}(\rho). \nonumber
\label{eq:rob_coh}
\end{aligned}\end{equation}
\end{pro}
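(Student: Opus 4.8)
The plan is to establish the two bounds separately; the inequality ``$\leq$'' is routine, and the substance is a matching lower bound obtained by rigging a discrimination game so that no free input can profit, whichever free measurement is used.

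For ``$\leq$'' I would reuse the opening of the proof of \thrm{rob_fixed}. Set $R=R_{\ff(\xx)}(\rho)$ and pick $\tau\in D(\xx)$, $\sigma\in\mathcal I(\xx)$ with $\rho=(1+R)\sigma-R\tau$. For any subchannels $\{\Psi_i\}$ and any free measurement $\{M_i\}\in\mm_\ff$, non-negativity of $\Tr[M_i\Psi_i(\tau)]$ gives $p_{\succ}(\{\Psi_i\},\{M_i\},\rho)\leq(1+R)\,p_{\succ}(\{\Psi_i\},\{M_i\},\sigma)$; since $\sigma\in\ff(\xx)$ and $\{M_i\}\in\mm_\ff$, the right-hand side is at most $(1+R)\max_{\sigma'\in\ff(\xx),\{M_i\}\in\mm_\ff}p_{\succ}(\{\Psi_i\},\{M_i\},\sigma')$, so for every choice of subchannels the ratio is at most $1+R$.

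For the lower bound I would proceed as follows. Let $X$ be an optimal witness in \eq{SDP_obj}; replacing $X$ by $X+\sum_k(1-X_{kk})\ketbra{k}{k}$ keeps \eq{SDP_cond1} and \eq{SDP_cond2} satisfied and does not decrease \eq{SDP_obj}, so we may assume $X=I+N$ with $N$ Hermitian, zero-diagonal, and $I+N\geq 0$. I would use a game with subchannels $\Psi_a(\eta)=\frac1{|A|}\Gamma_a(\eta)$, $a\in A$, for a finite family of CPTP maps $\{\Gamma_a\}$ that all act in the same way on incoherent states, $\Gamma_a(\sigma)=\Gamma_{a_0}(\sigma)$ for every $\sigma\in\mathcal I(\xx)$. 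Then $\sum_a\Psi_a=\frac1{|A|}\sum_a\Gamma_a$ is CPTP, and for any incoherent $\sigma$ and \emph{any} POVM $\{M_i'\}$ one gets $\sum_a\Tr[M_a'\Psi_a(\sigma)]=\frac1{|A|}\Tr[(\sum_a M_a')\Gamma_{a_0}(\sigma)]=\frac1{|A|}$; hence the denominator of the statement equals exactly $1/|A|$ for this game. It therefore suffices to choose a free (diagonal) POVM $\{M_a\}$ and such maps $\{\Gamma_a\}$ with $\sum_a\Gamma_a^{\ast}(M_a)=X$, for this gives $p_{\succ}(\{\Psi_a\},\{M_a\},\rho)=\frac1{|A|}\Tr(\rho X)=\frac{1+R}{|A|}$ and hence ratio $1+R$. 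Writing $\Gamma_a=\Gamma_{a_0}+\Delta_a$, so that $\Delta_a$ annihilates all diagonal operators, unitality of $\Gamma_{a_0}^{\ast}$ turns the condition into $\sum_a\Delta_a^{\ast}(M_a)=N$, and each $\Delta_a^{\ast}(M_a)$ automatically has zero diagonal, matching $N$.

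The main obstacle is exactly this last requirement: realizing the off-diagonal $N$ as $\sum_a\Delta_a^{\ast}(M_a)$ while keeping every $\Gamma_{a_0}+\Delta_a$ completely positive and every $M_a$ a diagonal POVM element. This is where the free-measurement constraint bites --- a diagonal $M_a$ sees only populations, so the maps $\Delta_a$ must rotate the $(j,k)$-coherences of the input into populations before the measurement, and complete positivity of the resulting channels has to be checked. I would exploit the slack furnished by $I+N\geq0$: choosing $\Gamma_{a_0}$ well inside the CP cone (for instance a mixture of the dephasing and completely depolarizing channels, or the ``measure in the eigenbasis of $X$ and reprepare'' channel) keeps coherence-probing corrections $\Delta_a$ of the required magnitude admissible, and the diagonal weights $(M_a)_{kk}$ can then be tuned to reproduce the required matrix elements of $N$. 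The qubit case is the model instance and dictates the recipe: there $X=2\ketbra{e}{e}$ with $\ket{e}$ unbiased, one may take $\Gamma_a(\eta)=WZ^a\eta Z^a W^{\dagger}$ ($a\in\{0,1\}$, $W$ the basis change sending $\{\ket{e},\ket{e^{\perp}}\}$ to the incoherent basis) and $\{M_a\}$ the incoherent-basis measurement, whereupon $Z\ketbra{e^{\perp}}{e^{\perp}}Z=\ketbra{e}{e}$ yields $\sum_a\Gamma_a^{\ast}(M_a)=X$. Carrying this out for general $\xx$ --- without assuming the eigenbasis of $X$ to be mutually unbiased with the incoherent basis, which already fails when $\dim\xx=3$ --- is the crux; it goes through for coherence precisely because the incoherent states are exactly the diagonal states, which makes the relevant ``probe one coherence'' channels manifestly completely positive and keeps the bookkeeping of $\sum_a\Delta_a^{\ast}(M_a)$ tractable, whereas for a generic resource theory the analogous step is what breaks down.
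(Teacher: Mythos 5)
Your upper bound is correct and matches the paper's. For the lower bound, the framework you set up is sound as far as it goes: normalizing the optimal witness to $X=I+N$ with $N$ Hermitian and zero-diagonal is legitimate (constraint \eq{SDP_cond2} applied to $\sigma=\dm{k}$ forces $X_{kk}\leq 1$, so the added diagonal is positive semidefinite and the objective does not decrease), the observation that a family of subchannels $\Psi_a=\frac{1}{|A|}\Gamma_a$ with all $\Gamma_a$ coinciding on incoherent states pins the denominator to exactly $1/|A|$ is correct, and the reduction of the target identity $\sum_a\Gamma_a^{\ast}(M_a)=X$ to $\sum_a\Delta_a^{\ast}(M_a)=N$ via unitality of $\Gamma_{a_0}^{\ast}$ is also correct, as is the qubit instance. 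But the proof is not complete: the entire burden sits in the step you yourself call ``the crux'' --- exhibiting, for arbitrary $d$ and arbitrary admissible $N$, CPTP maps $\Gamma_{a_0}+\Delta_a$ agreeing on incoherent states together with a diagonal POVM realizing $\sum_a\Delta_a^{\ast}(M_a)=N$ --- and this is only gestured at, with neither the complete positivity of the candidate channels nor the bookkeeping actually carried out. The quantitative issue is real: $\|N\|_\infty$ can be as large as $d-1$ (e.g.\ for the maximally coherent state), while each $\Gamma_a^{\ast}(M_a)$ has operator norm at most $1$, so the number of subchannels and the apportioning of $N$ among them must be controlled, and your qubit recipe (which relies on the eigenbasis of $X$ being unbiased with respect to the incoherent basis) does not indicate how.

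The paper closes exactly this gap by a different reduction. It first invokes the known fact \cite{Bu2017,regula_2017} that there exists a coherence-non-generating channel $\Lambda$ with $d\Tr[\Lambda(\rho)\dm{w}]=1+R_{\ff(\xx)}(\rho)$, where $\ket{w}$ is the maximally coherent state; this standardizes the witness to $d\dm{w}$ once and for all, so no general $N$ ever has to be synthesized. For this canonical witness the explicit family $U_i=S^iH_d$ (cyclic shift times Fourier transform) works: $M_i=U_i\dm{w}U_i^{\dagger}=\dm{i}$ is the free computational-basis measurement, and for any incoherent input the outputs $U_i\Lambda(\sigma)U_i^{\dagger}$ all have maximally mixed diagonal, so no \emph{diagonal} POVM can exceed $1/d$. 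Note that the paper thereby only needs the outputs on free inputs to be indistinguishable to free measurements (equal diagonals), which is strictly weaker than your requirement that the $\Gamma_a$ literally coincide on incoherent states. Standardizing the witness via $\Lambda$ first, and relaxing your invariance requirement to this weaker one, is the natural way to complete your argument.
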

If we further restrict the measurements to be rank-one, the same statement holds for single-qubit magic theory with pure input states.
\begin{pro}
\label{prop:rob_magic}
Let $\ff(\xx)={\rm STAB}(\xx)$ where ${\rm STAB}(\xx)$ is the set of stabilizer states defined on a single-qubit system and $\mm_\ff^1$ be the set of rank-one free measurements with respect to $\ff(\xx)$. For any pure state $\rho=\dm{\psi}\in D(\xx)$,
\begin{equation}\begin{aligned}
 &\max_{\{\Psi_i\}} \frac{\max_{\{M_i\}\in \mm_\ff^1}p_{\succ}(\{\Psi_i\},\{M_i\},\rho)}{\max_{\sigma \in \ff(\xx),\{M_i\}\in \mm_\ff^1} p_{\succ}(\{\Psi_i\},\{M_i\},\sigma)} \nonumber\\ &= 1 + R_{\ff(\xx)}(\rho). \nonumber
\label{eq:rob_magic}
\end{aligned}\end{equation}
\end{pro}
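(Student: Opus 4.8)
The plan is to prove the two inequalities separately. For ``$\le$'', fix any subchannel ensemble $\{\Psi_i\}$ and let $\{M_i^\star\}\in\mm_\ff^1$ attain the numerator for $\rho$. Since $\{M_i^\star\}$ is in particular a legitimate POVM, the ``$\le$'' half of the proof of \thrm{rob_fixed} applies verbatim and gives $p_{\succ}(\{\Psi_i\},\{M_i^\star\},\rho)\le\big(1+R_{\ff(\xx)}(\rho)\big)\max_{\sigma\in\ff(\xx)}p_{\succ}(\{\Psi_i\},\{M_i^\star\},\sigma)$; as $\{M_i^\star\}\in\mm_\ff^1$, replacing $\max_{\sigma}$ by the larger $\max_{\sigma,\{M_i\}\in\mm_\ff^1}$ only weakens the ratio, and a maximisation over $\{\Psi_i\}$ bounds the left-hand side by $1+R_{\ff(\xx)}(\rho)$.

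For ``$\ge$'' I would first reduce to a normal form: single-qubit Cliffords map $\mathrm{STAB}(\xx)$ onto itself, hence preserve $R_{\ff(\xx)}$ and $\mm_\ff^1$, and conjugating all subchannels by a fixed Clifford leaves the stated ratio invariant, so we may assume the Bloch vector $\vec r$ of $\dm\psi$ has nonnegative entries. Next I would identify the optimal witness: the feasible set of \eq{SDP_obj}--\eq{SDP_cond2} for a qubit is $\{\alpha\mathbb I+\vec w\cdot\vec\sigma:\alpha\ge\|\vec w\|_2,\ \alpha+\|\vec w\|_\infty\le1\}$, whose only non-rank-one extreme point is $\mathbb I$ (of objective value $0$), so for $\rho\notin\mathrm{STAB}(\xx)$ the optimum may be taken as $X^\star=\dm\chi/F(\chi)$, with $\ket\chi$ a pure state of Bloch vector $\hat m$ (in the positive octant) and $F(\chi):=\max_{\ket s\in\mathrm{STAB}(\xx)}|\braket{s|\chi}|^2=\tfrac12\big(1+\|\hat m\|_\infty\big)$. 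SDP duality then turns the claim into the task of exhibiting $\{\Psi_i\}$ with numerator $\ge|\braket{\chi|\psi}|^2$ and denominator $\le F(\chi)$, since then the ratio is $\ge|\braket{\chi|\psi}|^2/F(\chi)=1+R_{\ff(\xx)}(\rho)$.

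The construction I would use needs only two subchannels. Let $O_1\in SO(3)$ send $\hat m$ to $\hat e_3$, with its first two rows fixed by requiring $\text{row}_1-\text{row}_2$ to be a multiple of the unit vector in $\hat m^\perp$ supported on the two coordinates carrying the largest components of $\hat m$; write $U_1$ for the corresponding unitary and set $U_2:=RU_1$, where $R$ is the $\pi$-rotation about the output axis $\tfrac1{\sqrt2}(\hat e_1+\hat e_2)$. Take $\Psi_i(\cdot):=\tfrac12 U_i(\cdot)U_i^\dagger$ ($i=1,2$), measured in the free rank-one computational-basis POVM $\{\ketbra00,\ketbra11\}$, with $\ketbra00$ paired to $\Psi_1$ and $\ketbra11$ to $\Psi_2$. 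Because $O_1$ maps $\hat m\mapsto\hat e_3$ and $R$ flips $\hat e_3$, both $U_1^\dagger\ket0$ and $U_2^\dagger\ket1$ equal $\ket\chi$ up to a phase, so already this fixed measurement yields $p_{\succ}(\rho)=|\braket{\chi|\psi}|^2$; hence the numerator is $\ge|\braket{\chi|\psi}|^2$.

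The hard part is the denominator, and I expect it to be the main obstacle. Here one uses that a two-outcome rank-one free POVM on a qubit must be a projective measurement $\{\ketbra tt,\ketbra{t^\perp}{t^\perp}\}$ in a stabilizer basis (if $M_0+M_1=\mathbb I$ with $M_0,M_1$ rank one, both are rank-one projectors), so the adaptive freedom collapses to a choice among three bases. For a pure stabilizer input $\sigma$ with Bloch vector $\vec s=\pm\hat e_j$, a short computation using $U_2=RU_1$ gives $p_{\succ}\big(\sigma,\{\ketbra tt,\ketbra{t^\perp}{t^\perp}\}\big)=\tfrac12\big(1+\vec t\cdot\Pi_{\hat n^\perp}(O_1\vec s)\big)$ with $\hat n=\tfrac1{\sqrt2}(\hat e_1+\hat e_2)$, and maximising over $\vec t=\pm\hat e_k$, $\vec s=\pm\hat e_j$ gives $\tfrac12\big(1+\max_j\|\Pi_{\hat n^\perp}(O_1\hat e_j)\|_\infty\big)$. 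The gauge choice makes the first two entries of $\Pi_{\hat n^\perp}(O_1\hat e_j)$ bounded in modulus by $\tfrac{\sqrt3}{2}\|\hat m\|_\infty$ — using $\|\hat m\|_\infty^2\ge\tfrac13$, hence $\hat m_a^2+\hat m_b^2\ge\tfrac23$ for the two largest components, to control the norm-$\sqrt2$ vector $\text{row}_1-\text{row}_2$ — while its third entry is $\hat m_j$; thus $\max_j\|\Pi_{\hat n^\perp}(O_1\hat e_j)\|_\infty\le\|\hat m\|_\infty$ and the denominator is $\le\tfrac12(1+\|\hat m\|_\infty)=F(\chi)$, finishing the proof. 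This estimate is the only theory-specific ingredient; it relies on the rigidity of rank-one free qubit measurements, which removes any benefit from re-measuring, together with the elementary fact that $\hat m^\perp$ always carries a unit vector of $\ell^\infty$-norm at most $\sqrt{3/2}\,\|\hat m\|_\infty$, realised here by an explicit coordinate-supported vector — both features are special to one qubit with rank-one free measurements, which is why the Proposition is stated in this restricted form (cf.\ the difficulty noted after \thrm{rob_ent}).
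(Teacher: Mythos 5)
Your proof is correct, and its skeleton coincides with the paper's: both take a rank-one optimal witness $X^\star=\dm{\chi}/F(\chi)$, build two subchannels $\Psi_i(\cdot)=\tfrac12 U_i(\cdot)U_i^\dagger$ with $U_1^\dagger\ket{0}\propto U_2^\dagger\ket{1}\propto\ket{\chi}$ measured in the computational basis to get the numerator $|\braket{\chi|\psi}|^2=(1+R_{\ff(\xx)}(\rho))F(\chi)$, and reduce the denominator to pure stabilizer inputs together with Pauli-basis projective measurements (which, as you correctly argue, exhaust the two-outcome rank-one free POVMs, the number of outcomes being fixed at two by the number of subchannels). The genuine differences are in the two technical ingredients. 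First, you derive the rank-one form of the optimal witness from an extreme-point analysis of the feasible set $\{\alpha I+\vec w\cdot\vec\sigma:\alpha\ge\|\vec w\|_2,\ \alpha+\|\vec w\|_\infty\le 1\}$, where the paper cites it from the literature. Second, and more substantially, the paper fixes $U_1=R_{\bs n}(\theta)$, $U_2=R_{\bs n}(\theta+\pi)$ with $\bs n=(\sin\varphi,-\cos\varphi,0)$ and verifies the denominator bound $\tfrac12(1+\cos\theta)$ by explicit trigonometric case-checking over the inputs $\dm{0},\dm{+}$ and the three measurement axes; you instead fix the gauge of $O_1$ so that $\mathrm{row}_1-\mathrm{row}_2$ lies along the unit vector of $\hat m^\perp$ supported on the two largest coordinates of $\hat m$, collapsing all cases into the single estimate $\max_j\|\Pi_{\hat n^\perp}(O_1\hat e_j)\|_\infty\le\|\hat m\|_\infty$. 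I verified this: the first two entries of $\Pi_{\hat n^\perp}(O_1\hat e_j)$ have modulus $|a_j-b_j|/2\le \|\hat m\|_\infty/\sqrt{2(\hat m_a^2+\hat m_b^2)}\le\tfrac{\sqrt3}{2}\|\hat m\|_\infty$ using $\hat m_a^2+\hat m_b^2\ge\tfrac23$, and the third entry is $\hat m_j$, so the bound holds and your argument goes through. Your route is more uniform and arguably cleaner, at the price of the slightly delicate gauge-fixing. Two points worth spelling out explicitly: that an optimal $\hat m$ can be taken in the positive octant follows because flipping the sign of any component of $\vec w$ preserves both $\|\vec w\|_2$ and $\|\vec w\|_\infty$ (hence feasibility) and cannot decrease $\Tr(\rho X)$ when $\vec r\ge 0$; and the existence of an $O_1\in SO(3)$ realizing your gauge condition should be noted (take an orthonormal basis $\vec u,\vec v$ of $\hat m^\perp$ with $\vec u$ the prescribed vector and set the first two rows to $\tfrac1{\sqrt2}(\vec u\pm\vec v)$ up to the sign of $\vec v$ needed for $\det O_1=+1$).
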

We note that an optimal task in \prop{rob_coh} is distinct from the phase discrimination game considered in Refs.\,\cite{Napoli2016,Piani2016,Bu2017}, which requires a non-free measurement. 
We show that the resourceful part in the measurement can be pushed into the subchannels so that the measurement becomes free.
This idea also works for the resource theory of magic in two-dimensional systems, but already the generalization beyond this case becomes much less straightforward.

{\textit{\textbf{Conclusions.}}} ---
We have shown that every resource state defined in any convex resource theory is useful in a channel discrimination task. 
It automatically gives an operational characterization to all resource states, including bound resources, in which the word ``resource'' gains an actual physical meaning. 
We have then found that the maximal advantage in the success probability of a class of subchannel discrimination problems is exactly quantified by the generalized robustness measure. 
Our result ensures that the generalized robustness measure always admits an operational interpretation in every convex resource theory.
We have finally considered relaxing the constraint on the allowed measurement:
for the case when the measurement strategies for the resource-state input and for any free-state input can be chosen independently, the generalized robustness still serves as the exact quantifier for the maximal advantage when the input states are entangled states; analogous results can be shown under the restriction of free measurements in the resource theories of coherence and single-qubit magic.

An important outstanding open question is : to what extent can the results of \thrm{rob_ent} and Propositions \ref{prop:rob_coh}--\ref{prop:rob_magic} be generalized, providing a more complete understanding of the generalized robustness as a quantifier of operational advantage in various subchannel discrimination tasks? Additionally, it would be interesting to establish a similar operational characterization of a resource measure related to $R_{\ff(\xx)}$ called the \textit{standard} robustness of a resource, where the optimization over $\tau \in D(\xx)$ is replaced with an optimization over $\tau \in \ff(\xx)$, and which is known to admit operational interpretations in the resource theories of entanglement \cite{brandao_2007,brandao_2011} and magic \cite{howard_2017}.

\begin{acknowledgments}
{\em Note added.} --- An analogous result to \thrm{rob_ent} has been independently obtained by Bae {\it et al.}~\cite{bae2018more}, where the authors considered specifically the case of local subchannels applied to a single party, and investigated the advantages which entanglement can provide in that setting.
Also, recently Skrzypczyk and Linden~\cite{skrzypczyk2018robustness} have conjectured a general picture relating robustness-based measures, discrimination tasks, and information-theoretic quantities, for which our results establish one of the connections. 

{\em Acknowledgments.} --- We are grateful to Marco Piani and Joonwoo Bae for fruitful discussions and sharing with us parts of their unpublished work \cite{bae2018more} related to \thrm{rob_ent}. R.T. acknowledges the support from NSF, ARO, IARPA, and the Takenaka Scholarship Foundation. B.R. and G.A. acknowledge financial support from the European Research Council (ERC) under the Starting Grant GQCOP (Grant No.~637352).
K. B. acknowledges the support of 
the Templeton Religion Trust under
grant TRT 0159 and Academic Awards for Outstanding Doctoral Candidates at Zhejiang University.
Z.-W. L. acknowledges support by  AFOSR, ARO, and Perimeter Institute for Theoretical Physics. Research at Perimeter Institute is supported by the Government of Canada through Industry Canada and by the Province of Ontario  through  the  Ministry of Research and Innovation.
\end{acknowledgments}

\bibliographystyle{apsrev4-1}
\bibliography{myref}

\newpage

\begin{appendix}
\begin{center}
{\bf Supplemental Material}
\end{center}
\section{Detecting noise with a non-Clifford gate}
We apply the result of Theorem \ref{thrm:rob_fixed} to the theory of magic and relate it to a problem of detecting a noise that comes with an implementation of a non-Clifford gate.
In the following, we discuss one specific example, but a similar argument can be applied to other situations as well. 

The result of Theorem \ref{thrm:rob_fixed} together with its proof tells that the standard magic state called $T$-state defined by $\ket{T}\coloneqq \frac{1}{\sqrt{2}}(\ket{0}+e^{i\pi/4}\ket{1})$ gives the maximal advantage for discriminating the noiseless channel $\Lambda_0(\cdot)=I \cdot I$ and the phase flip channel $\Lambda_1(\cdot)=Z\cdot Z$, where $Z$ is the Pauli-$Z$ operator, when one is to use the projective measurement defined by $M_0=\dm{T}$, $M_1=\dm{\bar{T}}$ where $\ket{\bar{T}}=Z\ket{T}$.  
Let $U_{NC}\coloneqq \exp(-i\frac{\pi}{4}\frac{X-Y}{\sqrt{2}})$, which is the $\pi/2$ rotation with respect to the axis $\frac{1}{\sqrt{2}}(1,-1,0)$ on the Bloch sphere. 
It is a non-Clifford unitary, and it realizes the universal quantum computation together with the Clifford gates. 
Since $\ket{T}=U_{NC}^{\dagger}\ket{0}$ and $\ket{\bar{T}}=U_{NC}^{\dagger}\ket{1}$, measuring with POVMs $M_0$ and $M_1$ is equivalent to the computational basis measurement following the application of $U_{NC}$.
Thus, Theorem \ref{thrm:rob_fixed} implies that $T$-state is useful to detect the phase flip error prior to the non-Clifford unitary $U_{NC}$ when one is restricted to the computational basis measurement. 

It would be of practical relevance since verifying an error-free implementation of a non-Clifford gate is arguably important for fault-tolerant quantum computation, and the phase flip error, which is a source of decoherence, is a common type of error for many architectures.   
    
\section{Proofs of the results}

The proofs of some results in the manuscript rely on the following characterization of the optimality conditions for measurements in state discrimination.
\begin{lem}[\cite{HOLEVO1973337,Yuen1975OptimumTO,Helstrom}]
For the minimum-error state discrimination for the ensemble $\{q_i,\rho_i\}$ where one is to maximize $p_{\succ}=\sum_i q_i\Tr(M_i\rho_i)$, a set of POVMs $\{M_i\}$ is optimal if and only if
\ba
 \sum_iq_i\rho_iM_i-q_j\rho_j\geq 0,\ \forall j.
\label{eq:state_cond2}
\ea
\end{lem}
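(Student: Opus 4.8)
The plan is to cast minimum-error discrimination as a semidefinite program, so that the stated condition emerges as the conjunction of dual feasibility and complementary slackness. The ``if'' direction I would dispatch directly: assuming $\{M_i\}$ is a POVM such that $K := \sum_i q_i\rho_iM_i$ satisfies $K - q_j\rho_j \ge 0$ for every $j$ (so $K$ is in particular Hermitian), then for any competing POVM $\{N_i\}$ positivity of $N_i$ and of $K-q_i\rho_i$ gives $\Tr\!\big(N_i(K-q_i\rho_i)\big) \ge 0$, hence $p_{\succ}(\{N_i\}) = \sum_i q_i\Tr(N_i\rho_i) \le \sum_i\Tr(N_iK) = \Tr(K)$ using $\sum_iN_i = I$; since also $\Tr(K) = \sum_i q_i\Tr(\rho_iM_i) = p_{\succ}(\{M_i\})$, the POVM $\{M_i\}$ is optimal.

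For the ``only if'' direction I would write down the Lagrange dual of ``maximize $\sum_i q_i\Tr(M_i\rho_i)$ over POVMs'', namely ``minimize $\Tr(K)$ over Hermitian $K$ with $K \ge q_i\rho_i$ for all $i$''. The primal feasible set is compact, so an optimal $\{M_i\}$ exists, and it contains the strictly feasible point $M_i = I/n$ ($n$ the number of outcomes), so Slater's condition holds; thus strong duality applies and a dual optimum $K$ exists with $\Tr(K)$ equal to the optimal success probability. Rearranging this equality gives $\sum_i\Tr\!\big(M_i(K-q_i\rho_i)\big) = 0$, and since each summand is a trace of a product of positive semidefinite operators, each one vanishes; this forces $M_i(K-q_i\rho_i) = 0$ and hence $(K-q_i\rho_i)M_i = 0$. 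Summing $KM_i = q_i\rho_iM_i$ over $i$ with $\sum_iM_i = I$ gives $K = \sum_i q_i\rho_iM_i$, and feeding this back into dual feasibility $K \ge q_j\rho_j$ produces exactly $\sum_i q_i\rho_iM_i - q_j\rho_j \ge 0$ for all $j$.

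I expect the main obstacle to be the appeal to strong duality in the ``only if'' direction: one must verify Slater's condition and the attainment of both optima before complementary slackness can be used. A self-contained alternative would avoid SDP duality through a direct variational argument — perturb $M_j \mapsto M_j + \epsilon\Delta$ and $M_k \mapsto M_k - \epsilon\Delta$ for Hermitian $\Delta$ and small $\epsilon$, so that first-order optimality forces $\sum_i q_i\rho_iM_i$ to be Hermitian and to dominate each $q_j\rho_j$ — but there the delicate point is keeping the perturbed operators positive semidefinite, which one handles by supporting $\Delta$ where $M_j$ and $M_k$ are invertible and then passing to a limit.
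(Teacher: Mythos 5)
Your proof is correct. Note, however, that the paper offers no proof of this lemma to compare against: it is imported verbatim from the cited literature (Holevo; Yuen, Kennedy and Lax; Helstrom) as a known characterization of optimal measurements, and is only ever \emph{used} in the Supplemental Material (in the proofs of Proposition~\ref{prop:rob_gen_cond} and Theorem~\ref{thrm:rob_ent}) via its ``if'' direction. Your argument is essentially the standard modern derivation of the Holevo--Yuen--Kennedy--Lax conditions: the ``if'' direction is a clean two-line verification using $\Tr(AB)\geq 0$ for positive semidefinite $A,B$ together with $\sum_i N_i = I$, and the ``only if'' direction correctly identifies the dual SDP ($\min \Tr(K)$ subject to $K \geq q_i\rho_i$), verifies Slater's condition with the strictly feasible point $M_i = I/n$, and extracts the stated operator inequality from complementary slackness via the implication $\Tr(AB)=0 \Rightarrow AB=0$ for positive semidefinite operators. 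The one step worth double-checking is the passage from $M_i(K-q_i\rho_i)=0$ to $K=\sum_i q_i\rho_i M_i$, which you handle correctly by taking adjoints and summing against $\sum_i M_i = I$; this is exactly what turns the abstract dual variable $K$ into the concrete operator appearing in the lemma. Only the ``if'' direction is actually needed for the paper's applications, so if brevity mattered one could omit the duality argument entirely, but as a self-contained proof of the full equivalence your write-up is sound.
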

\vspace*{.5\baselineskip}

\subsection{Proof of Proposition \ref{prop:rob_gen_cond}}

\let\temp\thethm
\renewcommand{\thethm}{\ref{prop:rob_gen_cond}}

\begin{pro}
Suppose $\rho\in D(\xx)$, and let $X=\sum_j x_j\dm{e_j}$ be the optimal witness in \eq{SDP_obj} for $\rho$.
If there exists a set of unitaries $\{U_i\}_{i=1}^d$ such that $\sum_i U_i\dm{e_j}U_i^{\dagger}=I,\forall j$ and $U_i\sigma U_i^{\dagger} = U_j\sigma U_j^{\dagger},\forall \sigma\in\ff(\xx),\forall i,j$, then 
\bann
 \max_{\{\Psi_i\}} \frac{\max_{\{M_i\}}p_{\succ}(\{\Psi_i\},\{M_i\},\rho)}{\max_{\sigma \in \ff(\xx),\{M_i\}} p_{\succ}(\{\Psi_i\},\{M_i\},\sigma)} = 1 + R_{\ff(\xx)}(\rho).
\eann
\end{pro}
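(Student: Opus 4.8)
The plan is to prove the two inequalities separately, with the ``$\le$'' direction a quick reduction to \thrm{rob_fixed} and the ``$\ge$'' direction reusing the explicit construction from its proof. For ``$\le$'', fix any subchannel family $\{\Psi_i\}$ and let $\{M_i^\star\}$ attain $\max_{\{M_i\}}p_{\succ}(\{\Psi_i\},\{M_i\},\rho)$. Since $\max_{\sigma\in\ff(\xx),\{M_i\}}p_{\succ}(\{\Psi_i\},\{M_i\},\sigma)\ge\max_{\sigma\in\ff(\xx)}p_{\succ}(\{\Psi_i\},\{M_i^\star\},\sigma)$, the ratio in the proposition is at most $p_{\succ}(\{\Psi_i\},\{M_i^\star\},\rho)/\max_{\sigma\in\ff(\xx)}p_{\succ}(\{\Psi_i\},\{M_i^\star\},\sigma)$, which is exactly of the form appearing on the left-hand side of \thrm{rob_fixed} and is therefore at most $1+R_{\ff(\xx)}(\rho)$; taking the supremum over $\{\Psi_i\}$ gives ``$\le$''.

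For ``$\ge$'' I would exhibit a single good $\{\Psi_i\}$. Starting from the optimal witness $X=\sum_j x_j\dm{e_j}$ in \eq{SDP_obj} and the unitaries $\{U_i\}_{i=1}^d$ from the hypothesis, I would first replace each $U_i$ by $U_1^\dagger U_i$ (a harmless substitution, since we only need to produce one valid choice of subchannels): this preserves $\sum_i U_i\dm{e_j}U_i^\dagger=I$ and upgrades the symmetry to $U_i\sigma U_i^\dagger=\sigma$ for every $\sigma\in\ff(\xx)$ and every $i$. Set $\Psi_i(\cdot)=\tfrac1d U_i(\cdot)U_i^\dagger$. Then for any free $\sigma$ and any POVM $\{M_i\}$ one has $p_{\succ}(\{\Psi_i\},\{M_i\},\sigma)=\tfrac1d\sum_i\Tr(M_i\sigma)=\tfrac1d$, so the denominator equals $1/d$. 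For the numerator, $M_i=U_iXU_i^\dagger/\Tr(X)$ is a legitimate POVM (positivity from $X\ge0$, completeness from $\sum_i U_i\dm{e_j}U_i^\dagger=I$), and a direct computation gives $p_{\succ}(\{\Psi_i\},\{M_i\},\rho)=\Tr(\rho X)/\Tr(X)=(1+R_{\ff(\xx)}(\rho))/\Tr(X)$; hence the numerator is at least $(1+R_{\ff(\xx)}(\rho))/\Tr(X)$ and the ratio for this $\{\Psi_i\}$ is at least $d\,(1+R_{\ff(\xx)}(\rho))/\Tr(X)$.

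The step I expect to need the most care is showing $\Tr(X)\le d$, which is precisely what makes the factor $d/\Tr(X)\ge1$ and closes the bound. The idea is that the symmetry forces the witness constraint to be saturated in a dimension-tight way: for every $\sigma\in\ff(\xx)$, using $U_i^\dagger\sigma U_i=\sigma$ and cyclicity of the trace, $\Tr(\sigma X)=\tfrac1d\sum_i\Tr(U_i^\dagger\sigma U_i X)=\Tr\bigl(\sigma\,\tfrac1d\sum_i U_iXU_i^\dagger\bigr)=\Tr(X)/d$, since $\sum_i U_iXU_i^\dagger=\sum_j x_j\sum_i U_i\dm{e_j}U_i^\dagger=(\Tr X)\,I$; combining with $\Tr(\sigma X)\le1$ from \eq{SDP_cond2} gives $\Tr(X)\le d$. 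Together with the ``$\le$'' bound this yields the claimed equality. (One could alternatively skip the replacement $U_i\mapsto U_1^\dagger U_i$ and run the same computations carrying the $i$-independent, trace-one operator $\tilde\sigma\coloneqq U_1\sigma U_1^\dagger$ in place of $\sigma$.)
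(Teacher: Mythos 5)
Your proof is correct and follows essentially the same route as the paper's: the same subchannels $\Psi_i(\cdot)=\tfrac1d U_i(\cdot)U_i^\dagger$ and measurement $M_i\propto U_iXU_i^\dagger$ built from the optimal witness, with the hypothesis on $\{U_i\}$ used to make the free-state ensemble collapse to a single state so that the denominator becomes measurement-independent. The only difference is bookkeeping: the paper evaluates the free-state success probability of its specific measurement as $\Tr(\sigma X)/\Tr(X)\le 1/\Tr(X)$ and divides, whereas you compute the denominator exactly as $1/d$ and separately derive $\Tr(X)\le d$ from the symmetry together with $\Tr(\sigma X)\le 1$ --- both closings are valid, and your explicit $\Tr(X)\le d$ step is correctly justified.
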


\let\thethm\temp
\addtocounter{thm}{-1}

\begin{proof}
Since it can be easily seen that the left-hand side is less than or equal to the right-hand side, it suffices to show the converse holds. 
 Let $d={\rm dim}\xx$ and take $\Psi_i(\cdot)=\frac{1}{d}U_i\cdot U_i^{\dagger}$ and $M_i=\frac{1}{\Tr(X)}U_iXU_i^{\dagger}$. 
As we saw in the proof of \thrm{rob_fixed}, it gives $p_{\succ}(\{\Psi_i\},\{M_i\},\rho) = \Tr(\rho X)/\Tr(X)$. 
We shall see that under the assumption of the statement this measurement (in fact, any measurement) is optimal for any free-state inputs. 
Let $\sigma_i\equiv U_i\sigma U_i^{\dagger}$ and consider the state discrimination for the ensemble $\{1/d,\sigma_i\}_{i=1}^d$. The assumption $U_i\sigma U_i^{\dagger}=U_j\sigma U_j^{\dagger}$ implies $\sigma_i = \sigma_j, \forall i,j$, so \eq{state_cond2} is satisfied as
\ba
 \frac{1}{d}\sum_i \sigma_i M_i - \frac{1}{d}\sigma_j = \frac{1}{d}\sigma_j\left(\sum_i M_i - I\right)=0.
\ea
This measurement gives the success probability $\Tr(\sigma X)/\Tr(X)\leq 1/\Tr(X)$, which leads to the statement of the Proposition.
\end{proof}

\subsection{Proof of Theorem \ref{thrm:rob_ent}}

\let\temp\thethm
\renewcommand{\thethm}{\ref{thrm:rob_ent}}

\begin{thm}
Let $\ff(\xx)={\rm SEP}(\xx_1\otimes\xx_2)$ where ${\rm SEP}(\xx_1\otimes\xx_2)$ is the set of separable states with respect to the bipartition between $\xx_1$ and $\xx_2$. 
Then, for any $\rho\in D(\xx_1\otimes\xx_2)$,
\bann
 \max_{\{\Psi_i\}} \frac{\max_{\{M_i\}}p_{\succ}(\{\Psi_i\},\{M_i\},\rho)}{\max_{\sigma \in \ff(\xx),\{M_i\}} p_{\succ}(\{\Psi_i\},\{M_i\},\sigma)} = 1 + R_{\ff(\xx)}(\rho).
\eann
\end{thm}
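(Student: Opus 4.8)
First, the ``$\le$'' direction is immediate and, crucially, unaffected by allowing the measurements for $\rho$ and for the free states to be chosen independently: given any $\{\Psi_i\}$, fix the $\{M_i\}$ that is optimal for $\rho$ and write $\rho=(1+R_{\ff(\xx)}(\rho))\,\sigma-R_{\ff(\xx)}(\rho)\,\tau$ with $\sigma\in{\rm SEP}(\xx_1\otimes\xx_2)$ and $\tau\in D(\xx)$; since $M_i\ge 0$ and $\Psi_i(\tau)\ge 0$, discarding the $\tau$--term yields $\max_{\{M_i\}}p_{\succ}(\{\Psi_i\},\{M_i\},\rho)\le (1+R_{\ff(\xx)}(\rho))\,\max_{\sigma\in\ff(\xx),\{M_i\}}p_{\succ}(\{\Psi_i\},\{M_i\},\sigma)$, exactly as in \thrm{rob_fixed}. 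So the whole content is to exhibit, for each $\rho$, subchannels attaining the ratio $1+R_{\ff(\xx)}(\rho)$.

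The first step toward the lower bound is to simplify the denominator: since $\sigma\mapsto p_{\succ}(\{\Psi_i\},\{M_i\},\sigma)$ is linear, $\sup_{\sigma\in{\rm SEP}(\xx_1\otimes\xx_2)}\max_{\{M_i\}}p_{\succ}$ is attained at an extreme point of ${\rm SEP}$, i.e.\ at a pure product state $\ketbra{\phi}{\phi}\otimes\ketbra{\psi}{\psi}$. I would then take subchannels acting on a single party, $\Psi_i=\ee_i\otimes\id_{\xx_2}$ with $\{\ee_i\}$ an instrument on $\xx_1$: for any pure product input the second factor stays in the fixed state $\ketbra{\psi}{\psi}$, uncorrelated with the output, hence useless for the discrimination, so the denominator collapses to the ancilla-free discrimination power of the local instrument $\{\ee_i\}$ on $\xx_1$ alone --- a quantity that is easy to bound from above. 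The numerator, by contrast, has the second party of $\rho$ available as a \emph{genuinely entangled} ancilla, so the entanglement of $\rho$ can be converted into an advantage.

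Concretely, let $X=\sum_j x_j\ketbra{e_j}{e_j}$ be the optimal witness in \eq{SDP_obj} for $\rho$; the relevant property of an entanglement witness is that $(\bra{\phi}\otimes\I_{\xx_2})\,X\,(\ket{\phi}\otimes\I_{\xx_2})\le \I_{\xx_2}$ for every $\ket{\phi}$. Using a unitary $1$-design $\{W_a\}$ on $\xx_1$ (e.g.\ the Heisenberg--Weyl operators, for which $\tfrac{1}{d_1^2}\sum_a W_aAW_a^\dagger=\tfrac{\Tr A}{d_1}\I_{\xx_1}$), set $\Psi_a(\cdot)=\tfrac{1}{d_1^2}(W_a\otimes\I)(\cdot)(W_a\otimes\I)^\dagger$ together with a null subchannel $\Psi_{\perp}=0$, and $M_a=\tfrac{1}{d_1}(W_a\otimes\I)X(W_a\otimes\I)^\dagger$ together with $M_{\perp}=\I-\I_{\xx_1}\otimes\Tr_{\xx_1}X$. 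Provided $\Tr_{\xx_1}X\le\I_{\xx_2}$, this is a valid POVM, and a short computation with the $1$-design identity gives $\sum_a\Tr[M_a\Psi_a(\rho)]=\Tr(\rho X)/d_1=(1+R_{\ff(\xx)}(\rho))/d_1$, while the denominator --- the best discrimination of a Heisenberg--Weyl orbit of a pure input in $\xx_1$ --- is at most $\Tr(\tfrac{1}{d_1^2}\I_{\xx_1})=1/d_1$ since $\tfrac{1}{d_1^2}\I_{\xx_1}$ dominates each output; hence the ratio is $\ge 1+R_{\ff(\xx)}(\rho)$, which with the upper bound yields equality. This already covers, for instance, pure input states, and matches the local-subchannel result of Bae \textit{et al.}

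The main obstacle is the general case: for a generic \emph{mixed} $\rho$ the optimal witness need not satisfy $\Tr_{\xx_1}X\le\I_{\xx_2}$ for either choice of party (one can build explicit examples where both the party-$1$ and party-$2$ versions of the above construction fall strictly short of $1+R_{\ff(\xx)}(\rho)$), because the optimal separable decomposition $\Omega\ge\rho$, $\Tr\Omega=1+R_{\ff(\xx)}(\rho)$, can be irreducibly bipartite. The plan there is to distribute $\Omega$ (equivalently $X$) among the subchannels more cleverly --- still arranged so that for product inputs one party acts as a passive ancilla --- and the delicate point is, once more, the upper bound on the denominator: one must show that no separable input can reproduce the advantage furnished by the entanglement of $\rho$. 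This is exactly the step that relies on the detailed convex structure of ${\rm SEP}(\xx_1\otimes\xx_2)$ and has no evident analogue for other resources, which is why the corresponding statement for, e.g., magic is left open.
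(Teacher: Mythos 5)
Your ``$\le$'' direction is correct, and the restricted construction in your lower bound is sound as far as it goes: the $1$-design computation, the validity of the POVM under the extra hypothesis $\Tr_{\xx_1}X\le I_{\xx_2}$, and the reduction of the denominator to discriminating the local Heisenberg--Weyl orbit of a pure product input are all fine, and they do cover pure states (where one may take $X=d\,\dm{\Phi^+_d}$ on the Schmidt support, so the marginal condition holds automatically). But, as you yourself concede, the general mixed case is left as an unexecuted ``plan'': a proof that ends with ``distribute $\Omega$ among the subchannels more cleverly'' has not proved the theorem, and the general case is its actual content.

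The missing ingredient is a single concrete lemma that the paper imports from Ref.~\cite{regula_2018-1}: for every $\rho$ there exists a separability-preserving (non-entangling) channel $\Lambda$ with $d\,\Tr[\Lambda(\rho)\dm{\Phi^+_d}]=1+R_{\ff(\xx)}(\rho)$, where $d=\dim\xx_2$. Prepending this free channel to local Pauli subchannels, $\Psi_i(\cdot)=\frac{1}{d^2}(I\otimes P_i)\Lambda(\cdot)(I\otimes P_i)^\dagger$, and measuring in the Bell basis $M_i=(I\otimes P_i)\dm{\Phi^+_d}(I\otimes P_i)^\dagger$ yields the numerator $(1+R_{\ff(\xx)}(\rho))/d$ with no side condition on the witness: the Bell projector takes over the role of your $X$ and trivially satisfies your marginal constraint. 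The denominator is then controlled by essentially the argument you already have --- since $\Lambda$ maps separable states to separable states, one restricts to pure product inputs, projects the POVM onto $\xx_1$, and bounds the discrimination of the ensemble $\{1/d^2,\,P_i\dm{\phi}P_i\}$ on $\xx_2$ by $1/d$ (the paper certifies optimality of $N_i=\frac{1}{d}P_i\dm{\phi}P_i$ via the Holevo--Yuen--Kennedy--Lax conditions, though your domination bound suffices). So your denominator technique is the right one; what you are missing is the concentration step that converts an arbitrary optimal witness into the Bell projector by means of a free channel, which is precisely what makes the construction work for every mixed $\rho$.
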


\let\thethm\temp
\addtocounter{thm}{-1}

\begin{proof}
In the same way as in the proof of \thrm{rob_fixed}, it can be shown that the left-hand side is less than or equal to the right-hand side, so it suffices to show that the left-hand side is greater than or equal to the right-hand side. 
We assume ${\rm dim}\xx_1\geq{\rm dim}\xx_2=d$ without loss of generality. Separability-preserving (non-entangling) channels are the free operations in the resource theory of entanglement defined as all CPTP maps such that $\sigma \in {\rm SEP}(\xx_1\otimes\xx_2) \Rightarrow \Lambda(\sigma) \in {\rm SEP}(\xx_1\otimes\xx_2)$. 
For any $\rho$, there exists a separability-preserving channel $\Lambda$ such that $d\Tr[\Lambda(\rho)\dm{\Phi^+_d}]=1+R_{\ff(\xx)}(\rho)$ where $\ket{\Phi^+_d}=\frac{1}{\sqrt{d}}\ket{ii}$ \cite{regula_2018-1}.
Let us then define $U_i=I\otimes P_i$ for $i=1,\dots,d^2$ where $P_i$ is the $i$th Pauli operator with respect to basis $\{\ket{i}\}$, and the subchannel $\Psi_i(\cdot) = \frac{1}{d^2}U_i\Lambda(\cdot) U_i^{\dagger}$.
This choice of $\{U_i\}$ satisfies $\sum_iU_i\dm{\Phi^+_d}U_i^{\dagger}=I$ because local Pauli operators map one Bell basis to another Bell basis.
Thus, we can take $M_i=U_i\dm{\Phi^+_d}U_i^{\dagger}$ as a valid POVM, and it realizes that $p_{\succ}(\{\Psi_i\},\{M_i\},\rho)=\Tr\left[\Lambda(\rho)\dm{\Phi^+_d}\right]=(1+R_{\ff(\xx)}(\rho))/d$.

Now, our goal is to show that for this choice of $\{\Psi_i\}$, $\max_{\sigma \in \ff(\xx),\{M_i\}}p_{\succ}(\{\Psi_i\},\{M_i\},\sigma)\leq 1/d$. We obtain
\begin{equation*}
\begin{aligned}
 &\max_{\sigma \in \ff(\xx),\{M_i\}}p_{\succ}(\{\Psi_i\},{M_i},\sigma) \\
 &\quad= \max_{\sigma \in \ff(\xx),\{M_i\}} \sum_{i=1}^{d^2} \Tr_{\xx_1\xx_2}[M_i\Psi_i(\sigma)]\\
 &\quad= \max_{\sigma \in \ff(\xx),\{M_i\}} \sum_{i=1}^{d^2} \frac{1}{d^2}\Tr_{\xx_1\xx_2}[M_iU_i\Lambda(\sigma)U_i^{\dagger}]\\
 &\quad\leq \max_{\tilde{\sigma} \in \ff(\xx),\{M_i\}} \sum_{i=1}^{d^2} \frac{1}{d^2}\Tr_{\xx_1\xx_2}\left[M_iU_i\tilde{\sigma} U_i^{\dagger}\right]\\
  &\quad= \max_{\ket{\phi_{\xx_1}},\ket{\phi_{\xx_2}}}\max_{\{M_i\}} \sum_{i=1}^{d^2} \frac{1}{d^2}\times \\
 &\quad\quad\Tr_{\xx_1\xx_2}\left[M_iU_i\left(\dm{\phi_{\xx_1} }\otimes\dm{\phi_{\xx_2}}\right) U_i^{\dagger}\right]\\
 &\quad\leq \max_{\ket{\phi_{\xx_2}}} \max_{\{N_i\}} \sum_{i=1}^{d^2} \frac{1}{d^2}\Tr_{\xx_2}\left[N_iP_i\dm{\phi_{\xx_2}}P_i\right].
\end{aligned}
\end{equation*}
In the first inequality, we used that $\Lambda$ is separability-preserving, and thus the set of output states of $\Lambda$ with separable-state inputs is contained in the set of separable states. 
In the third equality, we used that the maximum of a linear functional over the separable states always occurs at an extreme point (a pure product state).
To get the last inequality, note that $N_i=\bra{\phi_{\xx_1}}M_i\ket{\phi_{\xx_1}}$ forms a valid set of POVMs acting on $\xx_2$ because clearly $N_i\geq 0$, and $\sum_i N_i = \sum_i \bra{\phi_{\xx_1}}M_i\ket{\phi_{\xx_1}}=\bra{\phi_{\xx_1}}I_{\xx_1\xx_2}\ket{\phi_{\xx_1}}=I_{\xx_2}$. 
The inequality then follows because the set of measurements with this form of POVMs is a subset of all the valid POVM measurements acting on $\xx_2$.
We will show that the quantity in the last line equals to $1/d$ by constructing a specific measurement strategy that achieves it and show that that strategy is optimal. 
Consider $N_i=\frac{1}{d}P_i\dm{\phi_{\xx_2}}P_i$. 
It forms a valid POVMs, i.e. $\sum_i N_i=I$, due to the Pauli twirling property.
Then, it is easily seen that $\sum_{i=1}^{d^2} \frac{1}{d^2}\Tr_{\xx_2}\left[N_iP_i\dm{\phi_{\xx_2}}P_i\right]=1/d$.
Now, we shall see that for any given $\ket{\phi_{\xx_2}}$, this choice of $\{N_i\}$ is optimal. 
Note that once $\ket{\phi_{\xx_2}}$ is given, the problem is reduced to the state discrimination for the ensemble $\{1/d^2,\ket{\phi_{\xx_2}^i}\}$ where $\ket{\phi_{\xx_2}^i}\equiv P_i\ket{\phi_{\xx_2}}$. 

We can check that our choice of $\{N_i\}$ satisfies the condition \eq{state_cond2} as follows. Note that $N_i=\frac{1}{d}\dm{\phi_{\xx_2}^i}$. Then, we get
\bann
&&\sum_i \frac{1}{d^2}\dm{\phi_{\xx_2}^i}N_i-\frac{1}{d^2}\dm{\phi_{\xx_2}^j}\\&=&\sum_i \frac{1}{d^3} \dm{\phi_{\xx_2}^i}-\frac{1}{d^2}\dm{\phi_{\xx_2}^j}\\
&=&\sum_i \frac{1}{d^3} P_i\dm{\phi_{\xx_2}}P_i-\frac{1}{d^2}\dm{\phi_{\xx_2}^j}\\
&=& \frac{1}{d^2}\left(I - \dm{\phi_{\xx_2}^j}\right) \geq 0
\eann
We just showed that for this specific choice of $\{\Psi_i\}$, $\max_{\{M_i\}}p_{\succ}(\{\Psi_i\},{M_i},\ket{\psi})= (1+R_{\ff(\xx)})/d$,  and $\max_{\sigma \in \ff(\xx),\{M_i\}}p_{\succ}(\{\Psi_i\},{M_i},\sigma)= 1/d$, which concludes the proof.
\end{proof}

\subsection{Proof of Proposition \ref{prop:rob_coh}}

\let\temp\thethm
\renewcommand{\thethm}{\ref{prop:rob_coh}}

\begin{pro}
Let $\ff(\xx)=\mathcal{I}(\xx)$ where $\mathcal{I}(\xx)$ is the set of incoherent states with some preferred basis and $\mm_\ff$ be the set of free measurements with respect to $\ff(\xx)$. For any $\rho\in D(\xx)$,
\begin{equation}\begin{aligned}
 &\max_{\{\Psi_i\}} \frac{\max_{\{M_i\}\in \mm_\ff}p_{\succ}(\{\Psi_i\},\{M_i\},\rho)}{\max_{\sigma \in \ff(\xx),\{M_i\}\in \mm_\ff} p_{\succ}(\{\Psi_i\},\{M_i\},\sigma)} \nonumber\\ &= 1 + R_{\ff(\xx)}(\rho). \nonumber
\label{eq:rob_coh}
\end{aligned}\end{equation}
\end{pro}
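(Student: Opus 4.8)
\emph{Proof proposal.} The inequality ``$\le$'' is the standard argument: for any instrument $\{\Psi_i\}$, let $\{M_i^\star\}\in\mm_\ff$ be the free measurement optimal for $\rho$, and write $\rho=(1+R_{\ff(\xx)}(\rho))\sigma_\star-R_{\ff(\xx)}(\rho)\tau$ with $\sigma_\star\in\ff(\xx)$ and $\tau\in D(\xx)$. Since $\sum_i\Tr[M_i^\star\Psi_i(\tau)]\ge 0$, the numerator for this $\{\Psi_i\}$ satisfies $p_{\succ}(\{\Psi_i\},\{M_i^\star\},\rho)\le(1+R_{\ff(\xx)}(\rho))\,p_{\succ}(\{\Psi_i\},\{M_i^\star\},\sigma_\star)$, and the right-hand side is at most $(1+R_{\ff(\xx)}(\rho))\max_{\sigma\in\ff(\xx),\{M_i\}\in\mm_\ff}p_{\succ}$ because $\{M_i^\star\}$ is free and $\sigma_\star\in\ff(\xx)$; maximizing over $\{\Psi_i\}$ gives the bound.

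For ``$\ge$'' the plan is to start from the phase-discrimination task that witnesses the maximal fidelity with the maximally coherent state and \emph{rotate its (non-free) measurement into the subchannels}, so that what remains is a free basis measurement. Write the incoherent basis as $\{\ket{e_k}\}_{k=0}^{d-1}$, put $\ket{\psi_d}=\tfrac1{\sqrt d}\sum_k\ket{e_k}$, let $F=\tfrac1{\sqrt d}\sum_{j,k}\omega^{jk}\ketbra{e_j}{e_k}$ with $\omega=e^{2\pi i/d}$ (so $F\ket{e_0}=\ket{\psi_d}$), and let $U_i=\sum_k\ketbra{e_{k+i}}{e_k}$ (indices mod $d$) be the cyclic shifts; a direct check gives that $FU_iF^\dagger$ is diagonal in $\{\ket{e_k}\}$ for every $i$. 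I would then invoke the known characterization of the generalized robustness of coherence \cite{Bu2017,regula_2017}: there exists a free channel $\Lambda$ (one mapping incoherent states to incoherent states) with $d\bra{\psi_d}\Lambda(\rho)\ket{\psi_d}=1+R_{\ff(\xx)}(\rho)$. Define the subchannels $\Psi_i(\cdot)=\tfrac1d\,U_iF^\dagger\Lambda(\cdot)FU_i^\dagger$ for $i=0,\dots,d-1$, which sum to a CPTP map, together with the \emph{free} rank-one POVM $M_i=\ketbra{e_i}{e_i}=U_i\ketbra{e_0}{e_0}U_i^\dagger$.

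For the resourceful input this yields $p_{\succ}(\{\Psi_i\},\{M_i\},\rho)=\sum_i\tfrac1d\bra{e_0}F^\dagger\Lambda(\rho)F\ket{e_0}=\bra{\psi_d}\Lambda(\rho)\ket{\psi_d}=(1+R_{\ff(\xx)}(\rho))/d$. The crucial observation concerns free inputs: if $\sigma\in\mathcal{I}(\xx)$ then $\Lambda(\sigma)$ is diagonal in $\{\ket{e_k}\}$; conjugating by $F$ turns $U_i$ into the diagonal operator $FU_iF^\dagger$ and turns $F^\dagger\Lambda(\sigma)F$ into the diagonal operator $\Lambda(\sigma)$, and since diagonal operators commute, $U_i$ commutes with $F^\dagger\Lambda(\sigma)F$, whence $\Psi_i(\sigma)=\tfrac1d F^\dagger\Lambda(\sigma)F$ \emph{independently of $i$}. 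Consequently, for \emph{any} POVM $\{N_i\}$ (in particular any free one) one gets $p_{\succ}(\{\Psi_i\},\{N_i\},\sigma)=\tfrac1d\Tr\big[\big(\sum_i N_i\big)F^\dagger\Lambda(\sigma)F\big]=\tfrac1d$, so the denominator of the statement equals $1/d$ exactly for this task, and the ratio is precisely $1+R_{\ff(\xx)}(\rho)$.

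The main obstacle is conceptual rather than computational: one must recognize that pre- and post-composing the standard phase-discrimination game with $F$ relocates \emph{all} the coherence out of the measurement and into the subchannels --- leaving a legitimate instrument and a free basis measurement --- \emph{and} that the resulting subchannels act identically on every incoherent state, which is exactly what forces the otherwise delicate maximization over independently chosen free states and free measurements in the denominator to collapse to the constant $1/d$. The remaining tasks --- confirming $\{\Psi_i\}$ is a valid instrument, that $FU_iF^\dagger$ is diagonal, and invoking the max-coherent-state fidelity formula for $1+R_{\mathcal{I}}$ --- are routine.
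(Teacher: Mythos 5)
Your proof is correct, and the discrimination task you construct is essentially the one in the paper: the same fidelity characterization $1+R_{\ff(\xx)}(\rho)=d\bra{\psi_d}\Lambda(\rho)\ket{\psi_d}$ for a coherence-nongenerating $\Lambda$, the same Fourier-plus-cyclic-shift structure (the paper folds $H_d=F^\dagger$ into its unitaries $U_i=X^iH_d$ rather than writing $\Psi_i(\cdot)=\tfrac1d X^iF^\dagger\Lambda(\cdot)FX^{-i}$, but these are the same subchannels up to the sign convention in the Fourier exponent), and the same free measurement $M_i=\dm{i}$. Where you genuinely diverge is the denominator bound. The paper reduces to pure incoherent inputs $\ket{l}$, expands a free POVM as $M_i=\sum_j c_{i,j}\dm{j}$ with $\sum_i c_{i,j}=1$, and uses the unbiasedness $|\braket{j|U_i|l}|^2=1/d$ to compute the value $1/d$ directly --- an argument that explicitly needs the measurement to be diagonal. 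You instead observe that $F^\dagger\Lambda(\sigma)F$ is circulant for incoherent $\sigma$ and hence commutes with the shifts, so $\Psi_i(\sigma)$ is independent of $i$ and the success probability is $\tfrac1d\Tr[(\sum_i N_i)F^\dagger\Lambda(\sigma)F]=1/d$ for \emph{any} POVM. This is cleaner and strictly stronger: it shows the ratio equals $1+R_{\ff(\xx)}(\rho)$ even when the denominator is maximized over arbitrary (not necessarily free) measurements, so your single construction simultaneously recovers the unrestricted coherence result (the paper's Proposition~\ref{prop:rob_gen_cond} applied to coherence) and the free-measurement statement of Proposition~\ref{prop:rob_coh}; in effect you are invoking the invariance mechanism of Proposition~\ref{prop:rob_gen_cond} where the paper's proof of Proposition~\ref{prop:rob_coh} does a self-contained computation. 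All the routine checks you defer (validity of the instrument, diagonality of $FU_iF^\dagger$, freeness of $\{\dm{e_i}\}$) do go through.
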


\let\thethm\temp
\addtocounter{thm}{-1}

\begin{proof}
Again, it can be easily seen that the left-hand side is less than or equal to the right-hand side, so we will show that the converse holds. 
Suppose $\dim\xx=d$. It was shown in \cite{Bu2017,regula_2017} that for any $\rho\in D(\xx)$, there exists a coherence non-generating channel $\Lambda$ such that $1+R_{\ff(\xx)}=d\Tr[\Lambda(\rho)\dm{w}]$ where $\ket{w}$ takes the form $\ket{w}=\frac{1}{\sqrt{d}}\sum_{j=0}^{d-1} \ket{j}$.
Let $H_d=\frac{1}{\sqrt{d}}\sum_{k,j=0}^{d-1} \zeta^{kj}\ketbra{k}{j}$ where $\zeta = e^{\frac{2\pi i}{d}}$. 
Take $U_0=H_d$ and $U_i=X^iU_0$ for $1\leq i \leq d-1$ where $X=\sum_j \ketbra{j+1\ ({\rm mod}\ d)}{j}$. 
Define $\Psi_i(\cdot)=\frac{1}{d} U_i \Lambda(\cdot) U_i^{\dagger}$ and $M_i=U_i\dm{w}U_i^{\dagger}=\dm{i}$, then $p_{\succ}(\{\Psi_i\},\{M_i\},\rho)=(1+R_{\ff(\xx)}(\rho))/d$, and clearly $\{M_i\}\in \mm_\ff$.
Therefore, it suffices to show $\max_{\sigma \in \ff(\xx),\{M_i\}\in \mm_\ff} p_{\succ}(\{\Psi_i\},\{M_i\},\sigma)\leq 1/d$. 
In a similar way we took in the proof of \thrm{rob_ent}, we obtain
\begin{equation}
 \begin{aligned}
 &\max_{\sigma \in \ff(\xx),\{M_i\}\in \mm_\ff}p_{\succ}(\{\Psi_i\},\{M_i\},\sigma) \\
 &\quad= \max_{\sigma \in \ff(\xx),\{M_i\}\in \mm_\ff} \sum_{i=1}^{d} \Tr[M_i\Psi_i(\sigma)]\\
 &\quad= \max_{\sigma \in \ff(\xx),\{M_i\}\in \mm_\ff} \sum_{i=1}^{d} \frac{1}{d}\Tr[M_iU_i\Lambda(\sigma)U_i^{\dagger}]\\
 &\quad\leq \max_{\tilde{\sigma} \in \ff(\xx),\{M_i\}\in \mm_\ff} \sum_{i=1}^{d} \frac{1}{d}\Tr\left[M_iU_i\tilde{\sigma} U_i^{\dagger}\right]\\
  &\quad= \max_{\ket{l}}\max_{\{M_i\}\in \mm_\ff} \sum_{i=1}^{d} \frac{1}{d}\Tr\left[M_iU_i\left(\dm{l}\right) U_i^{\dagger}\right]
\end{aligned}
\label{eq:coh_proof}
\end{equation}
In the first inequality, we used that $\Lambda$ is coherence-nongenerating, and thus the set of output states of $\Lambda$ with incoherent inputs is contained in the set of incoherent states. 
In the third equality, we used that the maximum over the incoherent states always occurs at a pure state. 
Since $\{M_i\}\in \mm_\ff$, $M_i$ takes the form $M_i=\sum_j c_{i,j}\dm{j}$ where $c_{i,j}\geq 0,\,\sum_i c_{i,j}=1\ \forall j$. It gives $\sum_{i=1}^d\Tr\left[M_iU_i\left(\dm{l}\right) U_i^{\dagger}\right]=\sum_{i,j}c_{i,j}|\braket{j|U_i|l}|^2$.
Since each $U_i$ acts on pure incoherent states as $U_i\ket{l}=\frac{1}{\sqrt{d}}\sum_{k=0}^{d-1} \zeta^{lk}\ket{k+i\ ({\rm mod}\ d)}$, we get $|\braket{j|U_i|l}|^2=1/d\ \forall i,j,l$.
Thus, the last equality of \eq{coh_proof} becomes
\bann
 \max_{\ket{l}}\max_{\{M_i\}\in \mm_\ff} \sum_{i=1}^{d} \frac{1}{d}\Tr\left[M_iU_i\left(\dm{l}\right) U_i^{\dagger}\right] = \frac{1}{d^2}\sum_{i,j} c_{i,j} = \frac{1}{d},
\eann
which is what we wanted to show.  
\end{proof}

\subsection{Proof of Proposition \ref{prop:rob_magic}}

\let\temp\thethm
\renewcommand{\thethm}{\ref{prop:rob_magic}}

\begin{pro}
Let $\ff(\xx)={\rm STAB}(\xx)$ where ${\rm STAB}(\xx)$ is the set of stabilizer states defined on a single-qubit system and $\mm_\ff^1$ be the set of rank-one free measurements with respect to $\ff(\xx)$. For any pure state $\rho=\dm{\psi}\in D(\xx)$,
\begin{equation}\begin{aligned}
 &\max_{\{\Psi_i\}} \frac{\max_{\{M_i\}\in \mm_\ff^1}p_{\succ}(\{\Psi_i\},\{M_i\},\rho)}{\max_{\sigma \in \ff(\xx),\{M_i\}\in \mm_\ff^1} p_{\succ}(\{\Psi_i\},\{M_i\},\sigma)} \nonumber\\ &= 1 + R_{\ff(\xx)}(\rho). \nonumber
\label{eq:rob_magic}
\end{aligned}\end{equation}
\end{pro}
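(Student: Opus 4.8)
The plan is to mirror the proofs of \thrm{rob_ent} and \prop{rob_coh}. The bound $\mathrm{LHS}\le 1+R_{\ff(\xx)}(\rho)$ is immediate: decompose $\rho=(1+R_{\ff(\xx)}(\rho))\sigma-R_{\ff(\xx)}(\rho)\tau$ with $\sigma\in\ff(\xx)$ and split $p_{\succ}$ term by term exactly as in \thrm{rob_fixed}. So the work is, for every pure $\rho=\dm{\psi}$, to exhibit subchannels $\{\Psi_i\}$ and a rank-one free measurement $\{M_i\}\in\mm_\ff^1$ for which the ratio is at least $1+R_{\ff(\xx)}(\rho)$. The first step is the explicit single-qubit picture: writing $\rho=\tfrac12(I+\vec r\cdot\vec\sigma)$ with $|\vec r|=1$, the stabilizer states form the octahedron $\{\vec v:\|\vec v\|_1\le1\}$, so $R_{\ff(\xx)}(\rho)=\tfrac{1-\|\vec r\|_\infty}{1+\|\vec r\|_\infty}$ with $\|\vec r\|_\infty$ the largest Pauli component, and --- crucially --- the optimal witness $X$ in \eq{SDP_obj} is the rank-one operator $X=(1+R_{\ff(\xx)}(\rho))\,\dm{\psi}$, proportional to a magic state. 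Consequently the ``canonical'' task of \thrm{rob_fixed} would use $M_i=U_i\dm{\psi}U_i^\dagger$, which is rank-one but not free, and the point of the proposition is that this magic can be traded into the subchannels.

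The construction I would use does exactly that. Pick six unitaries $V_1,\dots,V_6$ with $V_j\ket{\psi}=\ket{s_j}$, where $\ket{s_1},\dots,\ket{s_6}$ are the six pure stabilizer states, and set $\Psi_j(\cdot)=\tfrac16 V_j(\cdot)V_j^\dagger$ and $M_j=\tfrac13\dm{s_j}$. Then $\{M_j\}$ is a bona fide rank-one free POVM ($\sum_j\dm{s_j}=3I$), $\Lambda=\sum_j\Psi_j$ is CPTP, and since $\Psi_j$ rescales trace by $\tfrac16$ one checks $p_{\succ}\le\tfrac13$ for every input and every rank-one free measurement, with equality for $\rho=\dm{\psi}$ and the above $\{M_j\}$; hence the numerator is exactly $\tfrac13$. (When $\vec r$ does not point at an octahedral face centre, the simpler two-outcome task $\Psi_0(\cdot)=\tfrac12 V(\cdot)V^\dagger$, $\Psi_1(\cdot)=\tfrac12\sigma_x V(\cdot)V^\dagger\sigma_x$ with $V\ket\psi=\ket0$ and $M_0=\dm0$, $M_1=\dm1$ also works.) It then remains to show that for this task the best free success probability obeys $\max_{\sigma\in\ff(\xx),\{M'_j\}\in\mm_\ff^1}p_{\succ}(\{\Psi_j\},\{M'_j\},\sigma)\le\tfrac{1+\|\vec r\|_\infty}{6}=\tfrac{1}{3(1+R_{\ff(\xx)}(\rho))}$, which yields the ratio $1+R_{\ff(\xx)}(\rho)$.

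For the denominator I would argue as in \thrm{rob_ent}: $p_{\succ}$ is linear in $\sigma$, so its maximum over $\ff(\xx)$ is attained at a pure stabilizer state $\ket s$, and conjugating by the $V_j$ the problem becomes minimum-error discrimination of the fixed six-element ensemble $\{\tfrac16,\,V_j\dm{s}V_j^\dagger\}$ with the measurement ranging over the small finite family of rank-one free POVMs. For each of the six possible $\ket s$ I would guess the optimal free measurement from the Bloch geometry, confirm it with the optimality criterion \eq{state_cond2}, and evaluate the resulting probability, checking it is $\le\tfrac{1+\|\vec r\|_\infty}{6}$. This last verification is the main obstacle, and it is where the specific choice of the $V_j$ (and, for the two-outcome task, the residual freedom in $V$ beyond $V\ket\psi=\ket0$) matters. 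Indeed, the crude bound that forgets the POVM-completeness constraint $\sum_j c'_j\,\hat t'_j=0$ already fails for the ``most magical'' pure states, whose Bloch vector is a face centre $\tfrac1{\sqrt3}(\pm1,\pm1,\pm1)$ and for which $1+R_{\ff(\xx)}(\rho)=3-\sqrt3$: it would require a $3\times3$ rotation all of whose entries are at most $\|\vec r\|_\infty=1/\sqrt3$, and none exists. The estimate must therefore genuinely use the balance of the free POVM together with the three-fold symmetry of those states --- which is exactly why restricting to rank-one free measurements (so that the competing POVMs form a finite list) and to pure inputs is needed, and why, as remarked after the statement, the argument does not obviously extend beyond a single qubit.
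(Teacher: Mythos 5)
Your upper bound $\mathrm{LHS}\le 1+R_{\ff(\xx)}(\rho)$ is fine, but the lower-bound construction rests on a false premise: for a generic pure single-qubit state the optimal witness in \eq{SDP_obj} is \emph{not} proportional to $\dm{\psi}$, and consequently $1+R_{\ff(\xx)}(\rho)\neq \frac{2}{1+\|\vec r\|_\infty}$. Concretely, take $\hat r=(\cos\alpha,\sin\alpha,0)$ with $\alpha=\pi/8$. The rank-one witness $c\dm{w}$ with Bloch direction $\hat w$ is feasible iff $c\le 2/(1+\|\hat w\|_\infty)$ and attains $\Tr(\rho X)=\frac{1+\hat r\cdot\hat w}{1+\|\hat w\|_\infty}$; choosing $\hat w=\hat r$ gives $\frac{2}{1+\cos(\pi/8)}\approx 1.040$, whereas $\hat w=(1,1,0)/\sqrt2$ gives $\frac{1+\cos(\pi/8)}{1+1/\sqrt2}\approx 1.127$. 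So the optimal $\ket{w}$ is tilted toward a face centre of the octahedron and is strictly ``more magical'' than $\ket{\psi}$. This is fatal for your construction, not merely for the bookkeeping: because every unitary $V_j$ in your task satisfies $V_j\ket{\psi}=\ket{s_j}$, one has $p_{\succ}(\{\Psi_j\},\{M_j\},\dm{\phi})=\frac13|\braket{\psi|\phi}|^2$ for any pure stabilizer $\ket\phi$ and your own measurement $\{M_j\}$, so the denominator is at least $\frac13\max_{\phi\in\mathrm{STAB}}|\braket{\phi|\psi}|^2=\frac{1+\|\vec r\|_\infty}{6}$ while the numerator is exactly $\frac13$; the achievable ratio is therefore capped at $\frac{2}{1+\|\vec r\|_\infty}$, which is \emph{strictly less} than $1+R_{\ff(\xx)}(\rho)$ except at the symmetric points (octahedron vertices, edge midpoints, face centres) where $\hat w=\hat r$ happens to be optimal. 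The same cap applies to your two-outcome variant with $V\ket\psi=\ket 0$. No choice of free measurement in the denominator can rescue this, since the gap is already present for the single free input $\ket\phi$ closest to $\ket\psi$.

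The fix is to anchor the task at the optimal witness vector rather than at $\ket\psi$, which is what the paper does: writing the optimal rank-one witness as $X=c\dm{w}$ (rank-one is WLOG for pure inputs), it takes the two rotations $U_1=R_{\bs n}(\theta)$, $U_2=R_{\bs n}(\theta+\pi)$ sending $\ket{w}$ to $\ket{0}$ and $\ket{1}$, sets $\Psi_i(\cdot)=\frac12 U_i\cdot U_i^\dagger$ and $M_i=\frac1c U_iXU_i^\dagger\in\{\dm{0},\dm{1}\}\in\mm_\ff^1$, so the numerator is $\Tr(\rho X)/c=(1+R_{\ff(\xx)}(\rho))/c$. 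The denominator bound $\le 1/c=\max_\phi|\braket{\phi|w}|^2$ is then obtained by reducing, via linearity and the symmetry of the stabilizer octahedron, to the two inputs $\sigma\in\{\dm{0},\dm{+}\}$ and the three Pauli projective measurements (the only two-outcome rank-one free POVMs summing to $I$), each case being an explicit trigonometric inequality in the angles $(\theta,\varphi)$ of $\ket{w}$. Your remaining step --- verifying the denominator bound via the optimality criterion \eq{state_cond2} --- was in any case left unexecuted, but the construction must be changed before that verification can even succeed.
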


\let\thethm\temp
\addtocounter{thm}{-1}

\begin{proof}
Once again, it can be easily seen that the left-hand side is less than or equal to the right-hand side, so we will show that the converse holds.
 Let $X=c\dm{w},\ket{w}\in D(\xx)$ be an optimal witness satisfying \eq{SDP_cond1} and \eq{SDP_cond2}. 
 Note that the optimal witness for pure state can be always taken in this rank-one form~\cite{regula_2018}.
 Because of \eq{SDP_cond2} and optimality, $c^{-1}=\max_{\ket{\phi}\in \ff(\xx)}|\braket{\phi|w}|^2$. Let $\ket{\tilde{\phi}}\in\ff(\xx)$ be a state that achieves this maximum.
 Because of the symmetry of the stabilizer hull, it suffices to prove the statement only for the situation where all the elements of the Bloch coordinate of $\ket{w}$ are positive and $\ket{\tilde{\phi}}=\ket{0}$.
Let us write the Bloch coordinate of $\ket{w}$ as $(\sin\theta \cos\varphi, \sin\theta \sin\varphi, \cos\theta)$. Then the above restriction limits the domain of $\theta$ and $\varphi$ as 
\ba
 0\leq \theta \leq \arccos(1/\sqrt{3}),\ 0\leq \varphi \leq \pi/2.
\label{eq:angle_domain}
\ea
Note that the upper bound of $\theta$ depends on $\varphi$; $\theta$ is restricted in the way that the closest pure stabilizer state is $\ket{0}$. $\theta = \arccos(1/\sqrt{3})$ can be only achieved when $\varphi=\pi/4$. 
 Set ${\bs n}=(\sin\varphi,-\cos\varphi,0)$ and take $U_1\equiv R_{\bs n}(\theta)$ where $R_{\bs n}(\theta)=\exp(-i(\theta/2) {\bs n}\cdot{\bs \sigma})$ denotes the single-qubit rotation by $\theta$ with respect to axis ${\bs n}$.
Also, define $U_2=R_{\bs n}(\theta + \pi)$.
By definition, it realizes $U_1\ket{w}=\ket{0}$ and $U_2\ket{w}=\ket{1}$, so they satisfy $\sum_i U_i\dm{w}U_i^{\dagger}=I$.  
Thus, by taking $\Psi_i(\cdot)$ = $\frac{1}{2}U_i\cdot U_i^{\dagger}$ and $M_i=\frac{1}{c}U_iXU_i^{\dagger}$, we obtain $p_{\succ}(\{\Psi_i\},\{M_i\},\ket{\psi})=\left[1+R_{\ff(\xx)}(\ket{\psi})\right]/c=1$.
Note that $\{M_i\}\in \mm_\ff^1$ because $M_0=\dm{0}$ and $M_1=\dm{1}$.

Now, it suffices to show that for this choice of $\{\Psi_i\}$, $\max_{\sigma \in \ff(\xx),\{M_i\}\in \mm_\ff^1} p_{\succ}(\{\Psi_i\},\{M_i\},\sigma)\leq 1/c=|\braket{0|w}|^2$.
This can be significantly simplified as follows. First, due to the form of $p_{\succ}$, maximum only occurs at pure free states. Moreover, thanks to the symmetry of the setting we are considering, it suffices to only consider $\sigma = \dm{0},\dm{+}$. 
Also, since $\sum_i M_i = I$, the maximum occurs at two-value projective measurement along either $X,Y,Z$ direction.
At the end, what we need to show is just that for $\ket{\phi}\in \{\ket{0},\ket{+}\}$,
\bann
 \max_{\substack{\ket{\xi_1},\ket{\xi_2}\in {\rm STAB}\\ \braket{\xi_2|\xi_1}=0}} \frac{1}{2}\sum_{i=1}^2\Tr[\dm{\xi_i}U_i\dm{\phi}U_i^{\dagger}] \leq |\braket{0|w}|^2.
\eann
However, for $\ket{\phi}=\ket{0}$, it is almost trivial (if you draw a picture) that the left-hand side equals to the right-hand side where the maximum happens when $\ket{\xi_0}=\ket{0}$ and $\ket{\xi_1}=\ket{1}$. 
For $\ket{\phi}=\ket{+}$, it can be checked that the above inequality holds for three choices of the measurement by the following straightforward calculations.
It is convenient to recall that $U_1=R_{\bs n}(\theta)$, $U_2=R_{\bs n}(\theta + \pi)$, and
\bann
\begin{cases}
R_{\bs n}(\theta)\ket{0}=\cos(\theta/2)\ket{0}-e^{i\varphi}\sin(\theta/2)\ket{1} \\
R_{\bs n}(\theta)\ket{1}=e^{-i\varphi}\sin(\theta/2)\ket{0}+\cos(\theta/2)\ket{1}
\end{cases}
\eann
\bann
\begin{cases}
R_{\bs n}(\theta+\pi)\ket{0}=-\sin(\theta/2)\ket{0}-e^{i\varphi}\cos(\theta/2)\ket{1}\\
R_{\bs n}(\theta+\pi)\ket{1}=e^{-i\varphi}\cos(\theta/2)\ket{0}-\sin(\theta/2)\ket{1}
\end{cases}
\eann

\begin{itemize}
\item For $\ket{\xi_1}=\ket{0},\ket{\xi_2}=\ket{1}$ (clearly $\ket{\xi_1}=\ket{1},\ket{\xi_2}=\ket{0}$ does not achieve the maximum.)
\bann
 |\braket{0|R_{\bs n}(\theta)|+}|^2&=&\frac{1}{2}|\cos(\theta/2)+e^{-i\varphi}\sin(\theta/2)|^2 \\&=& \frac{1}{2}(1+\cos\varphi\sin\theta),
\eann 
and
\bann
 |\braket{1|R_{\bs n}(\theta+\pi)|+}|^2&=&\frac{1}{2}|-e^{i\varphi}\cos(\theta/2)-\sin(\theta/2)|^2 \\&=& \frac{1}{2}(1+\cos\varphi\sin\theta).
\eann
Thus,
\bann
 &&\frac{1}{2}|\braket{0|R_{\bs n}(\theta)|+}|^2 + \frac{1}{2}|\braket{1|R_{\bs n}(\theta+\pi)|+}|^2 \\&=& \frac{1}{2}(1+\cos\varphi\sin\theta) 
 \\&\leq& \frac{1}{2}(1+\cos\theta)=|\braket{0|w}|^2
\eann
The inequality is due to the assumption that $|\braket{0|w}|^2\geq |\braket{+|w}|^2$.

\item $\ket{\xi_1}=\ket{+},\ket{\xi_2}=\ket{-}$ (clearly $\ket{\xi_1}=\ket{-},\ket{\xi_2}=\ket{+}$ does not achieve the maximum.)
\bann
 &&|\braket{+|R_{\bs n}(\theta)|+}|^2\\
 &=&\frac{1}{4}|\cos(\theta/2)-e^{i\varphi}\sin(\theta/2)+e^{-i\varphi}\sin(\theta/2)+\cos(\theta/2)|^2 \\
 &=& |\cos(\theta/2)-i\sin\varphi\sin(\theta/2)|^2\\
&=&\cos^2(\theta/2)+\sin^2\varphi\sin^2(\theta/2)\\
&=& 1-\cos^2\varphi\sin^2(\theta/2),  
\eann 
and
\bann
 &&|\braket{-|R_{\bs n}(\theta+\pi)|+}|^2\\&=&\frac{1}{4}|-\sin(\theta/2)+e^{-i\varphi}\cos(\theta/2)+e^{i\varphi}\cos(\theta/2)+\sin(\theta/2)|^2 \\
 &=& \cos^2\varphi\cos^2(\theta/2)
\eann
Thus,
\bann
 &&\frac{1}{2}|\braket{+|R_{\bs n}(\theta)|+}|^2+\frac{1}{2}|\braket{-|R_{\bs n}(\theta+\pi)|+}|^2 \\&=& \frac{1}{2}(1+\cos^2\varphi\cos\theta) \\
 &\leq& \frac{1}{2}(1+\cos\theta)=|\braket{0|w}|^2.
\eann

\item $\ket{\xi_1}=\ket{+y},\ket{\xi_2}=\ket{-y}$ (clearly $\ket{\xi_1}=\ket{-y},\ket{\xi_2}=\ket{+y}$ does not achieve the maximum.)
\bann
 &&|\braket{+y|R_{\bs n}(\theta)|+}|^2\\&=&\frac{1}{4}|\cos(\theta/2)+e^{-i\varphi}\sin(\theta/2)+ie^{i\varphi}\sin(\theta/2)-i\cos(\theta/2)|^2 \\
 &=& \frac{1}{4}|(1-i)\cos(\theta/2)+(e^{-i\varphi}+ie^{i\varphi})\sin(\theta/2)|^2\\
&=& \frac{1}{4}|\sqrt{2}\cos(\theta/2)-2i\sin(\varphi-\pi/4)\sin(\theta/2)|^2\\
&=&\frac{1}{2}\left[\cos^2(\theta/2)+2\sin^2(\varphi-\pi/4)\sin^2(\theta/2)\right]\\
&=&\frac{1}{2}\left[\cos^2(\theta/2)+2\cos^2(\varphi+\pi/4)\sin^2(\theta/2)\right],
\eann
and
\bann
&& |\braket{-y|R_{\bs n}(\theta+\pi)|+}|^2\\&=&\frac{1}{4}|-\sin(\theta/2)+e^{-i\varphi}\cos(\theta/2)-ie^{i\varphi}\cos(\theta/2)-i\sin(\theta/2)|^2 \\
 &=& \frac{1}{4}|-(1+i)\sin(\theta/2)+(e^{-i\varphi}-ie^{i\varphi})\cos(\theta/2)|^2\\
 &=& \frac{1}{4}|\sqrt{2}\sin(\theta/2)+2i\sin(\varphi+\pi/4)\cos(\theta/2)|^2\\
&=&\frac{1}{2}\left[\sin^2(\theta/2)+2\sin^2(\varphi+\pi/4)\cos^2(\theta/2)\right]
\eann 
Thus,
\bann
 &&\frac{1}{2}|\braket{+y|R_{\bs n}(\theta)|+}|^2+\frac{1}{2}|\braket{-y|R_{\bs n}(\theta+\pi)|+}|^2 \\
 &=& \frac{1}{4}\left[1+2\cos^2(\theta/2)-2\cos^2(\varphi+\pi/4)\cos\theta\right]\\
  &=& \frac{1}{4}\left[2-\cos(2\varphi+\pi/2)\cos\theta\right]\\
    &=& \frac{1}{2}\left[1+\frac{1}{2}\sin(2\varphi)\cos\theta\right]\\
    &\leq& \frac{1}{2}(1+\cos\theta)=|\braket{0|w}|^2.
\eann
\end{itemize}
\end{proof}
\end{appendix}
\end{document}